\newtheorem{theorem}{Theorem}
\newtheorem{proposition}{Proposition}
\newtheorem{lemma}{Lemma}
\newtheorem{remark}{Remark}
\newcommand{\ud}{\text{d}}
\newcommand{\Def}{\stackrel{\triangle}{=}}
\crefname{lemma}{Lemma}{Lemmas}
\Crefname{lemma}{Lemma}{Lemmas}
\crefname{theorem}{Theorem}{Theorems}
\Crefname{theorem}{Theorem}{Theorems}
\crefname{section}{Section}{Sections}
\Crefname{section}{Section}{Sections}
\crefname{figure}{Fig.}{Figs.}
\Crefname{figure}{Figure}{Figures}
\crefname{table}{Table}{Tables}
\Crefname{table}{Table}{Tables}
\crefname{equation}{Eq.}{Eqs.}
\Crefname{equation}{Equation}{Equations}
\crefname{proposition}{Proposition}{Propositions}
\Crefname{proposition}{Proposition}{Propositions}
\crefname{corollary}{Corollary}{Corollaries}
\Crefname{corollary}{Corollary}{Corollaries}
\crefname{problem}{Problem}{Problems}
\Crefname{problem}{Problem}{Problems}
\crefname{definition}{Definition}{Definitions}
\Crefname{definition}{Definition}{Definitions}
\begin{document}
\title{Broadcast Approaches to the Diamond Channel}



\author{\IEEEauthorblockN{Mahdi Zamani and Amir K. Khandani}
\IEEEauthorblockA{\\ 
Department of Electrical and Computer Engineering\\
University of Waterloo, 
Waterloo, ON N2L 3G1\\
Emails: \{mzamani, khandani\}@uwaterloo.ca}
\thanks{This paper was presented in part at the IEEE International Symposium on
Information Theory, ISIT, Saint Petersburg, Russia, 2011.}
}


\maketitle

\begin{abstract}
%

The problem of dual-hop transmission from a source to a destination via two parallel full-duplex relays in block Rayleigh fading environment is investigated.
All nodes in the network are assumed to be oblivious to their forward channel gains; however, they have perfect information about their backward channel gains. 
We also assume a stringent decoding delay constraint of one fading block that makes the definition of ergodic (Shannon) capacity meaningless. 
The focus of this paper is on simple, efficient, and practical relaying schemes to increase the expected-rate at the destination. 
For this purpose, various combinations of relaying protocols and the broadcast approach (multi-layer coding) are proposed. 
For the decode-forward (DF) relaying, the maximum finite-layer expected-rate as well as two upper-bounds on the continuous-layer expected-rate are obtained.
The main feature of the proposed DF scheme is that the layers being decoded at both relays are added coherently at the destination although each relay has no information about the number of layers being successfully decoded by the other relay. It is proved that the optimal coding scheme is transmitting uncorrelated signals via the relays. 
Next, the maximum expected-rate of ON/OFF based amplify-forward (AF) relaying is analytically derived.
For further performance improvement, a hybrid decode-amplify-forward (DAF) relaying strategy, adopting the broadcast approach at the source and relays, is proposed and its maximum throughput and maximum finite-layer expected-rate are presented.
Moreover, the maximum throughput and maximum expected-rate in the compress-forward (CF) relaying adopting the broadcast approach, using optimal quantizers and Wyner-Ziv compression at the relays, are fully derived. 
All theoretical results are illustrated by numerical simulations. 
As it turns out from the results, when the ratio of the relay power to the source power is high, the CF relaying outperforms DAF (and hence outperforms both DF and AF relaying); otherwise, DAF scheme is superior. 

\end{abstract}
\IEEEpeerreviewmaketitle

\let\thefootnote\relax\footnotetext{Financial supports provided by Natural Sciences and Engineering Research Council of Canada (NSERC) and Ontario Ministry of Research \& Innovation (ORF-RE) are gratefully acknowledged.}

\section{Introduction} \label{intro}

The information theoretic aspects of wireless networks, have recently received wide attention. The widespread applications of wireless networks, along with many recent results in the network information theory area, have motivated efficient strategies for practical applications. Fading is often used for modeling the wireless channels \cite{biglieri}. The growing demand for quality of service (QoS) and network coverage inspires the use of several intermediate wireless nodes to help the communication among distant nodes, which is referred to as relaying or multi-hopping. Many papers analyze the information theoretic and communication aspects of relay networks. An information theoretic view of the three-node relay channel was proposed by Cover and El Gamal in \cite{cover-elgamal}, which was generalized in \cite{kramer} and \cite{xie} for multi-user and multi-relay networks. In \cite{cover-elgamal}, two different coding strategies were
introduced. In the first strategy, originally named ``cooperation"
and later known as ``decode-forward" (DF), the relay decodes the
transmitted message and cooperates with the source to send the
message in the next block. In the second strategy, ``compress-forward" (CF), the relay compresses the received
signal and sends it to the destination. Besides studying the DF and CF strategies, the authors in \cite{zahedi, schein, gastpar, laneman} have studied the
``amplify-forward" (AF) strategy for the Gaussian relay network.
In AF relaying, the relay amplifies and transmits its
received signal to the destination. Despite its simplicity, AF relaying  performs well in many
scenarios. El-Gamal and Zahedi \cite{zahedi} employed AF relaying in the single relay channel and derived the single
letter characterization of the maximum achievable rate using a simple linear
scheme (assuming frequency division and additive white Gaussian channel). 

The problems of transmission between a disconnected source and destination via two parallel intermediate nodes (the diamond channel) were analyzed in \cite{schein} for the additive white Gaussian channels and in \cite{sanderovich} for the case where the relays transmit in orthogonal frequency bands/time slots. There are also some asymptotic analyses on a source to destination communication via parallel relays with fading channels where the forward channels are known at both the transmitter and relays sides, see \cite{shahab} and references therein. Diversity gains in a parallel relay network using distributed space-time codes, where channel state information (CSI) is only at the receivers, was presented in \cite{hua}, \cite{jamshid} and references therein.

Here, we consider the problem of maximum expected-rate in the diamond channel. A good application for this network is a TV broadcasting system from a satellite to cellphones through base stations where users with better channels might receive additional services, such as high definition TV signal \cite{reza}. The growing adoption of broadcasting mobile TV services suggests that it has the potential to become a mass market application. However, the quality and success of such services are governed by guaranteeing a good coverage, particularly in areas that are densely populated. This paper suggests the use of relays to provide better coverage in such strategically important areas. The main transmitter which is a central TV broadcasting unit uses two parallel relays in each area with large density to improve coverage (see \cref{multicast_channel_model}). According to the large number of relay pairs covering their respective areas and also the large number of users in each designated area, neither the main transmitter nor the relays can access the forward channel state information. With no delay constraint, the ergodic nature of the fading channel can be experienced by sending very large transmission blocks, and the ergodic capacity is well studied \cite{biglieri}. According to the stringent delay constraint for the problem in consideration, the transmission block length is forced to be shorter than the dynamics of the slow fading process, though still large enough to yield a reliable communication. The performance of such channels are usually evaluated by outage capacity. The notion of capacity versus outage was introduced in \cite{ozarow} and \cite{biglieri}. Shamai and Steiner \cite{shamai} proposed a broadcast approach, a.k.a.\ multi-layer coding, for a single user block fading channel with no CSI at the transmitter, which maximizes the expected-rate. Since the expected-rate increases with the number of code layers \cite{shamai1997broadcast}, they evaluated the highest expected-rate using a continuous-layer (infinite-layer) code. This idea was applied to a dual-hop single-user channel in \cite{steiner} and a channel with two collocated cooperative users in \cite{steiner2}. The broadcast approach can also achieve the maximum average achievable rate in a block fading multiple-access channel with no CSI at the transmitters \cite{tse}. The optimized trade-off between the QoS and network coverage in a multicast network was derived in \cite{reza} using the broadcast approach.

\begin{figure}
\centering
\includegraphics[scale=0.3]{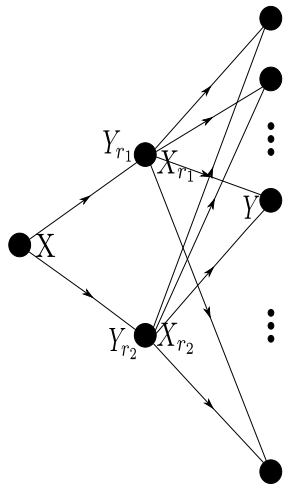}
\caption{Dual-hop multicast transmission via two parallel relays.}
\label{multicast_channel_model}
\end{figure}

In this paper, we investigate various relaying strategies in conjunction with the broadcast approach (multi-layer coding) scheme for the dual-hop channel with parallel relays where neither the source (main transmitter) nor the relays access the forward channels. Throughout the paper, we assume that channel gains are fixed during two consecutive blocks of transmission. The main focus of this paper is on simple and efficient schemes, since the relays can not buffer multiple packets and also handle large delays. Different relaying strategies such as DF, AF, hybrid DF-AF (DAF), and CF are considered. In DF relaying, a combination of the broadcast strategy and coding is proposed, such that the common layers, decoded at both relays, are decoded at the destination cooperatively. Note that each relay has no information about the number of layers being decoded by the other relay. The destination decodes from the first layer up to the layer that the channel condition allows. After decoding all common layers, the layers decodable at just one relay are decoded. It is proved that the optimal coding strategy is transmitting uncorrelated signals via the relays. Since the DF relaying in conjunction with continuous-layer coding is a seemingly intractable problem, 
the maximum finite-layer expected-rate is analyzed. Furthermore, two upper-bounds for the maximum continuous-layer expected-rate in DF are obtained.
In the DF relaying, the relays must know the codebook of the source and have enough time to decode the received signal. In the networks without these conditions, AF relaying is considered next. Both the maximum throughput and the maximum expected-rate, using a space-time code permutation between the relays, are derived. 
In the same direction and for further performance improvement, at the cost of increased complexity, a hybrid DF and AF scheme called DAF is proposed. In DAF with broadcast strategy, each relay decode-and-forwards a portion of the layers and amplify-and-forwards the rest. 
Afterwards, a multi-layer CF relaying is presented. 
In the CF relaying, the relays do not decode their received signals; instead, compress the signals by performing the optimal quantization in the Wyner-Ziv sense \cite{wyner}, which means each relay quantizes its received signal relying on the side information from the other relay.   
Besides the proposed achievable expected-rates, some upper bounds based on the channel enhancement idea and the max-flow min-cut theorem are obtained.
In all the proposed relaying strategies combined with the broadcast strategy, the maximum expected-rate increases with the number of code layers. It is numerically shown that when the ratio of the relay power to the source power is large, the CF relaying outperforms DAF, and hence outperforms both DF and AF; otherwise, DAF is the superior scheme. 

The rest of this paper is organized as follows: 
In \cref{system-model}, preliminaries are presented. Next, DF, AF, DAF, and CF relaying strategies in conjunction with the broadcast approach are elaborated in \cref{DF-multi-layer,AF-multi-layer,DAF-multi-layer,CF-multi-layer}, respectively. Afterwards, in \cref{upper-bound-multicast-relay}, some upper bounds on the maximum expected-rate are obtained. 
Numerical results are presented in \cref{numerical-results}. Finally, \cref{conclusion} concludes the paper.


\section{Preliminaries} \label{system-model}

\subsection{Notation}

Throughout the paper, we represent the expected operation by $\mathbb{E}(\cdot)$, the probability of event $A$ by $\Pr \{A\}$, the covariance matrix of random variables $X$ and $Y$ by $\mathbf R_{XY}$, the conditional covariance matrix of random variables $X$ and $Y$ by $\mathbf R_{XY|W,Z,\cdots}$, the differential entropy function by $\mathcal H(\cdot)$, and the mutual information function by $\mathcal I(\cdot;\cdot)$. The notation ``$\ln$'' is used for natural logarithm, and rates are expressed in \emph{nats}. We denote $f_{\mathrm x}(\cdot)$ and $F_{\mathrm x}(\cdot)$ as the probability density function (PDF) and the cumulative density function (CDF) of random variable $\mathrm x$, respectively. For every function $F(x)$, consider $\overline F(x)=1-F(x)$ and $F^{\prime}(x)=\frac{\ud F(x)}{\ud x}$. $\vec{X}$ is a vector and $\mathbf{Q}$ is a matrix. $\mathbf{I}_{n_t}$ denotes the $n_t \times n_t$ identity matrix. $s^o$ is the optimum solution with respect to the variable $s$. We denote the determinant, conjugation, matrix transpose, and matrix conjugate transpose operators by $\det$, $^*$, $^\text{T}$, and $^\dag$, respectively. $\mathcal U(\cdot)$ and $|\cdot|$ represent the unit step function and the absolute value or modulus operator, respectively. $\text{tr}(\mathbf{Q})$ denotes the trace of the matrix $\mathbf{Q}$. $\mathcal{CN}(0,1)$ denotes the complex Gaussian distribution with zero mean and unit variance. $\mathcal W$ is the Lambert $W$-function, also called the omega function, which is the inverse function of $f(W)=We^W$ \cite{corless}. $\text{E}_1(x)$ is the exponential integral function, which is $\int_x^\infty \frac{e^{-t}}{t} \ud t, x \geq 0$. $\Gamma(s,x)=\int_x^\infty t^{s-1} e^{-t} \ud t$ is the upper incomplete gamma function, and $\Gamma(s)=\Gamma(s,0)$. Throughout the paper, we assume that $\mathbb E \left( \left|X_i\right|^2 \right)=1,~ \forall i$.


\subsection{Network Model}

\begin{figure}
\centering
\includegraphics[scale=0.3]{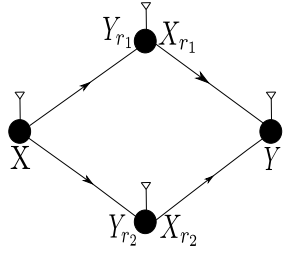}
\caption{Network model of dual-hop transmission from a single-antenna source to a single-antenna destination via two single-antenna relays, the diamond channel.}
\label{fig3_1}
\end{figure}

Let us first restate the network model. As \cref{fig3_1} shows, the destination receives data via two parallel relays and there is no direct link between the source and the destination.
The source transmits a signal $X$ subject to the total power constraint $P_{s}$, i.e., $\mathbb E\left( \left| X \right|^2 \right) \leq P_s$, and the received signal at the $\ell$'th relay is denoted by
\begin{align}
 Y_{r_\ell}=h_{r_\ell}X+Z_{r_\ell},~~~ \ell=1,2
\end{align}
The independent and identically distributed (i.i.d.) additive white Gaussian noise (AWGN) at the $\ell$'th relay is represented by $Z_{r_\ell} \sim \mathcal{CN}(0,1)$, and $h_{r_\ell} \sim \mathcal{CN}(0,1)$ is the channel coefficient from the source to the $\ell$'th relay. 
The $\ell$'th relay forwards a signal $X_{r_\ell}$ to the destination under the total power constraint $P_{r}$, i.e., $\mathbb E\left( \left| X_{r_\ell} \right|^2 \right) \leq P_r, ~ \ell=1,2$. The received signal at the destination is 
\begin{align}
Y=h_{1}X_{r_1}+h_{2}X_{r_2}+Z, 
\end{align}
where
$Z \sim \mathcal{CN}(0,1)$ is the i.i.d.\ AWGN and $h_{\ell} \sim \mathcal{CN}(0,1)$ is the channel coefficient from the $\ell$'th relay to the destination.
All $h_{r_\ell}$ and $h_{\ell}$ are assumed to be constant during two consecutive transmission blocks. Obviously, channel gains $a_{\ell}=|h_{\ell}|^2$ and $a_{r_\ell}=|h_{r_\ell}|^2$ have exponential distribution.

Note that the transmitter as well as both relays and the receiver are equipped with one antenna. 
We assume that the relays operate in a full-duplex mode and they are not capable of buffering data over multiple coding blocks or rescheduling tasks. Since there is no link between the relays, the half-duplex mode is a direct result of the full-duplex mode with frequency or time division.

\subsection{Definitions}

In the following, the performance metrics which are widely used throughout the paper are defined.
The expected-rate $\mathcal R_{f}$ is the average achievable rate when a multi-layer code is transmitted, i.e., the statistical expectation of the achievable rate. 
The maximum expected-rate, namely $\mathcal R_{f}^m$, is the maximum of the expected-rate over all transmit covariance matrices at the relays, transmission rates in each layer, and all power distributions of the layers. Mathematically,
\begin{align} \label{finite-layer-expected-rate-definition-formula}
\mathcal R_{f}^m  \Def  \max_{
\begin{subarray}{c}
R_i, P_i ,\mathbf Q_i \\
\text{tr}(\mathbf Q_i) \leq P_i \\
\sum_{i=1}^K P_i = P
\end{subarray}}
\sum_{i=1}^K \mathcal P_i\left( R_i \right) R_i,
\end{align}
where $R_i$, $\mathbf Q_i$, and $\mathcal P_i$ are the transmission rate, transmit covariance matrix at the relays, and probability of successful decoding in the $i$'th layer, respectively.

If a continuum of code layers are transmitted, the maximum continuous-layer (infinite-layer) expected-rate, namely $\mathcal R_{c}^m$, is given by maximizing the continuous-layer expected-rate over the layers' power distribution.

When a single-layer code is transmitted at the source and the relays, the average achievable rate is called the throughput, namely $\mathcal R_{s}$.
The maximum throughput, namely $\mathcal R_{s}^m$, is the maximum of the throughput 
over all transmit covariance matrices at the relays $\mathbf Q$, and transmission rates $R$. Mathematically,
\begin{align} \label{throughput-definition-formula}
\mathcal R_{s}^m  \Def  \max_{
\begin{subarray}{c}
R ,\mathbf Q \\
\text{tr}(\mathbf Q) \leq P
\end{subarray}}
 \mathcal P\left( R \right)  R.
\end{align}


\section{Decode-Forward Relays} \label{DF-multi-layer}

In order to enhance the lucidity of this section, single-layer coding is studied first. The idea is then extended to multi-layer coding. Since the continuous-layer expected-rate of this scheme is a seemingly intractable problem, a finite-layer coding scenario is analyzed in \cref{DF_ml}. 

\subsection{Maximum Throughput} \label{DF-outage}

In single-layer coding, a signal $X=\gamma X_1$ with power $P_{s}$ and rate $R=\ln(1+P_{s} s)$ is transmitted, where  $\gamma^2=P_{s}$. The $\ell$'th relay decodes and forwards the received signal in case $a_{r_\ell} \geq s$. If $a_{r_\ell}<s$, then $a_{r_\ell}$ is replaced by zero. The coding scheme at the relays is a distributed block space-time code in the Alamouti code sense \cite{alamouti}. 
At time $t$, the first relay sends $\alpha X_1(t)$ while the other relay sends $\beta X_1(t+1)$. To satisfy the relays power constraint, it is required that $\alpha^2=\beta^2=P_{r}$. At time $t+1$, the first and the second relays send $-\alpha X_1^*(t+1)$ and $\beta X_1^*(t)$, respectively. The relay with $a_{r_\ell}<s$ simply sends nothing. Applying the Alamouti decoding procedure and decomposing into two parallel channels, the throughput is given by
\begin{align} \label{pout-DF}
 \mathcal R_{D,s} =\Bigg[ \Pr \left\{ a_{r_1} \geq s \right\} \Pr \left\{ a_{r_2} \geq s \right\} \Pr \left\{ a_1+a_2 \geq s \frac{P_{s}}{P_{r}} \right\}+  \nonumber \\ 
\left. \Pr \left\{ a_{r_1} \geq s \right\} \Pr \left\{ a_{r_2} < s \right\} \Pr \left\{ a_{1} \geq s \frac{P_{s}}{P_{r}} \right\}+ \right. \nonumber \\ 
 \Pr \left\{ a_{r_1} < s \right\} \Pr \left\{ a_{r_2} \geq s \right\} \Pr \left\{ a_{2} \geq s \frac{P_{s}}{P_{r}} \right\} \Bigg] \ln(1+P_{s} s).
\end{align}
The first term in the right hand side of \cref{pout-DF} represents the case of decoding the signal at both relays and the destination. The second and third terms represent the probability of decoding the signal at only one relay and the destination.
Substituting the channel gain CDFs in (\ref{pout-DF}), the throughput is given by
\begin{align} \label{second_hop_alamouti_rate22}
 \mathcal R_{D,s} = \left(\!  \frac{P_{s}}{P_{r}} s e^{-s}\!-\!e^{-s}\!+\!2  \right)\! e^{-s \left(  \frac{P_{s}}{P_{r}} + 1 \right)} \ln(\!1\!+\!P_{s} s).
\end{align}

\Cref{DF-single-theorem} proves the optimality of the above scheme and presents the maximum throughput of the channel.

\begin{theorem} \label{DF-single-theorem}
In the proposed single-layer DF, 
the maximum throughput is achieved by sending uncorrelated signals on the relays.
the maximum throughput is given by
\begin{align} \label{second_hop_alamouti_rate22_2}
 \mathcal R_{D,s}^m\!=\!\!\max_{0 < s < s_t}\!\! \left(\!  \frac{P_{s}}{P_{r}} s e^{-s}\!-\!e^{-s}\!+\!2  \right)\! e^{-s \left(  \frac{P_{s}}{P_{r}} + 1 \right)} \ln(\!1\!+\!P_{s} s),
\end{align}
where $s_t = \min\left\{2\frac{P_r}{P_s},1.212\right\}$.
\end{theorem}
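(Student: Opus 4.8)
The plan is to establish two separate facts: first, the optimality of transmitting uncorrelated signals on the relays; second, the characterization of the maximizing domain for $s$, i.e.\ that the maximization in \cref{second_hop_alamouti_rate22_2} can be restricted to $0 < s < s_t$ with $s_t = \min\{2P_r/P_s, 1.212\}$.

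For the first part, I would consider a general transmit covariance matrix $\mathbf Q$ at the relays with $\text{tr}(\mathbf Q) \le P_r$, so that the two relays may send correlated versions of the layer signal. The key observation is that the destination's received signal, when both relays are active, is a $2\times 1$ MISO channel whose achievable rate depends on $\mathbf Q$ only through $\vec h^\dag \mathbf Q \vec h$ where $\vec h = (h_1,h_2)^{\text T}$; since $h_1,h_2$ are unknown at the relays and i.i.d.\ $\mathcal{CN}(0,1)$, I would show that the success probability $\Pr\{\vec h^\dag \mathbf Q \vec h \ge s P_s/P_r\}$ is maximized (stochastically, over the random $\vec h$) when $\mathbf Q = \frac{P_r}{2}\mathbf I_2$, because spreading power equally and uncorrelatedly makes $\vec h^\dag \mathbf Q \vec h$ a scaled chi-square with the most favorable (in the relevant tail sense) distribution — correlation or power imbalance reduces the effective diversity order and hurts the outage probability. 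A clean way to argue this is to note that for fixed trace, $\vec h^\dag\mathbf Q\vec h = \lambda_1|g_1|^2 + \lambda_2|g_2|^2$ with $\lambda_1+\lambda_2 = P_r$ and $g_i$ i.i.d.\ $\mathcal{CN}(0,1)$, and the CDF of a sum of two exponentials with rates $1/\lambda_i$ is, for every threshold, minimized at $\lambda_1=\lambda_2$; this is a short convexity/Schur-concavity argument. The analogous single-active-relay terms are unaffected by the splitting. Hence the expression \cref{second_hop_alamouti_rate22} with the Alamouti/equal-power choice is optimal, which also justifies why the distributed Alamouti scheme (orthogonal, uncorrelated streams) is the right code.

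For the second part, I would take the objective function $g(s) = \left(\frac{P_s}{P_r} s e^{-s} - e^{-s} + 2\right) e^{-s(P_s/P_r + 1)} \ln(1+P_s s)$ and argue that any maximizer lies in $(0, s_t)$. The factor $\frac{P_s}{P_r} s e^{-s} - e^{-s} + 2$ must stay nonnegative for the probabilistic interpretation (it is a sum of probabilities times a common factor), but more usefully I would show it is decreasing and identify where the whole product starts to decrease monotonically. The bound $s < 2P_r/P_s$ presumably comes from requiring the probability term $\Pr\{a_1+a_2 \ge sP_s/P_r\}$ — or the derivative structure — to be in a regime where increasing $s$ still helps; and the constant $1.212$ is the threshold beyond which, even ignoring the $P_s/P_r$ ratio, the derivative $g'(s)$ is negative (it should be the root of a transcendental equation obtained by setting a simplified version of $g'(s)=0$, numerically $\approx 1.212$). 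So I would compute $g'(s)$, factor out positive terms, and show the remaining bracket is negative for $s \ge s_t$, whence restricting to $0 < s < s_t$ loses nothing.

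The main obstacle I anticipate is the first part: rigorously proving that the equal-power uncorrelated covariance is optimal against \emph{all} feasible $\mathbf Q$ for \emph{every} value of the threshold $sP_s/P_r$ simultaneously — i.e.\ a stochastic-dominance statement rather than a single-point optimization. The Schur-concavity of the two-exponential CDF in the rate parameters $(\lambda_1,\lambda_2)$ is the crux; once that is in hand the monotonicity/derivative analysis in the second part is routine calculus, with the numerical constant $1.212$ extracted from a one-variable transcendental equation.
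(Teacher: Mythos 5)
There is a genuine gap, and it sits exactly where you flagged your ``main obstacle.'' The Schur-concavity claim you lean on --- that the CDF of $\lambda_1|g_1|^2+\lambda_2|g_2|^2$ is, \emph{for every threshold}, minimized at $\lambda_1=\lambda_2$ --- is false. With the total relay power fixed, compare equal split against putting everything on one eigen-direction: the complementary CDFs are $(1+u)e^{-u}$ versus $e^{-u/2}$ in the normalized threshold $u$, and these cross at $u\approx 2.51$; beyond the crossover the concentrated (fully correlated, $\rho=1$) allocation has the \emph{larger} success probability. This is precisely why the paper does not attempt a stochastic-dominance argument: it invokes the Jorswieck--Boche result \cite{jorswieck} only to reduce the candidates to $\rho\in\{0,1\}$, establishes the threshold $s_c\approx 2.5129\,P_r/P_s$ below which $\overline F_{\rho=0}(s)\geq \overline F_{\rho=1}(s)$ (your dominance holds only there, cf.\ \cref{two_dist_single}), and then spends the bulk of the proof showing that the optimizing threshold $s^o$ of the $\rho=1$ objective must satisfy $s^o<s_t\leq s_c$ (via the comparison $r(s)<s<g(s,P_s)$ for $s\geq s_t$). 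Only then does $\rho^o=0$ follow.

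This also breaks the decoupled structure of your proposal: you treat ``uncorrelated is optimal'' and ``the maximizer lies in $(0,s_t)$'' as two independent steps, with the second being routine calculus on the $\rho=0$ objective. In the paper the two are coupled --- the restriction $s^o<s_t$ is derived for the $\rho=1$ objective and is what rescues the covariance claim, since without it the correlated scheme could win at large $s$. Your derivative analysis in part two is fine in spirit (it mirrors the paper's $r(s)$ vs.\ $g(s,P_s)$ comparison and is where the constant $1.212$ and the bound $2P_r/P_s$ come from), but as written it does not close the hole in part one. A repaired version of your argument would essentially have to reproduce the paper's logic: reduce to $\rho\in\{0,1\}$, prove the threshold-location bound under $\rho=1$, and conclude $\rho^o=0$ by comparison on $(0,s_t)\subset(0,s_c)$. (A minor additional point: the relays have per-node power constraints $\mathbb E|X_{r_\ell}|^2\leq P_r$, so the relevant covariance is $P_r\bigl[\begin{smallmatrix}1&\rho\\ \rho&1\end{smallmatrix}\bigr]$ with trace $2P_r$, not $\text{tr}(\mathbf Q)\leq P_r$ as you wrote.)
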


\begin{proof}

Consider $\mathbf Q\Def P_r \left[ \begin{matrix} 1 & \rho \\
\rho & 1 \end{matrix} \right]$ as the relays transmit covariance matrix. Therefore, $\mathbb E\left(X_{r_1}X_{r_2}^*\right)=\rho P_r$. 
In the following, we shall show that $\rho^o=0$.
Let us define $\overline F(s)$ as follows
\begin{align} \label{pout111}
 \overline F(s)\Def \Pr \left\{ a_{r_1} \geq s \right\} \Pr \left\{ a_{r_2} \geq s \right\} \Pr \left\{ a \geq s \frac{P_{s}}{P_{r}} \right\}+  \nonumber \\ 
\left. \Pr \left\{ a_{r_1} \geq s \right\} \Pr \left\{ a_{r_2} < s \right\} \Pr \left\{ a_{1} \geq s \frac{P_{s}}{P_{r}} \right\}+ \right. \nonumber \\ 
 \Pr \left\{ a_{r_1} < s \right\} \Pr \left\{ a_{r_2} \geq s \right\} \Pr \left\{ a_{2} \geq s \frac{P_{s}}{P_{r}} \right\},
\end{align}
where $a=\frac{1}{P_r} \vec h\mathbf Q\vec h^\dag$ and $\vec h = \left[\begin{matrix} h_1 & h_2 \end{matrix}\right]$.
The maximum throughput of the diamond channel in general form is 
\begin{align} \label{pout112}
 \mathcal R_{D,s}^m=\max_s \overline F(s) \ln(1+P_{s} s).
\end{align}
The only term in $\overline F(s) $ which depends on $\rho$ is $\Pr \left( a \geq s \frac{P_{s}}{P_{r}} \right)$.
Since $\mathbf Q$ is non-negative definite, one can write it as $\mathbf Q=\mathbf U\mathbf D\mathbf U^\dag$, where $\mathbf D=P_r \left[ \begin{matrix}
1+\rho & 0 \\
0 & 1-\rho \end{matrix} \right]$ is non-negative diagonal and $\mathbf U=\frac{1}{\sqrt{2}}\left[ \begin{matrix}
1 & 1 \\
1 & -1 \end{matrix} \right]$ is unitary. 
Since $h_1$ and $h_2$ are independent complex Gaussian random variables, each with independent zero-mean and equal variance real and imaginary parts, the distribution of $\vec h\mathbf U$ is the same as that of $\vec h$ \cite{telatar}. Thus,
\begin{align} \label{alamouti_proof_R3}
   \Pr \left\{ a \geq s \frac{P_{s}}{P_{r}} \right\} &= \Pr\left\{\vec h \mathbf Q 
   \vec h^\dag \geq P_{s} s \right\} \nonumber \\
   &= \Pr\left\{\left( \vec h \mathbf U \right) \mathbf D 
   \left(\vec h \mathbf U \right)^\dag \geq P_{s} s \right\} \nonumber \\
   &=\Pr\left\{\vec h \mathbf D 
   \vec h^\dag \geq P_{s} s \right\}.
\end{align}

The last expression in \cref{alamouti_proof_R3} corresponds to the complementary CDF in MISO channels.
Jorswieck and Boch \cite{jorswieck} proved that in an uncorrelated MISO channel with no CSI at the transmitter, but perfect
CSI at the receiver, for every transmission rate, the
optimal transmit strategy minimizing the outage probability is to use a fraction of all available
transmit antennas and perform equal power allocation with uncorrelated signals. Therefore, the solution of $\max_{\rho, \text{tr}(\mathbf D) \geq 2P_r}  \Pr\left\{\ln \left( 1+\vec h'
   \mathbf D
   \vec h^\dag \right) \geq \ln(1+P_{s} s) \right\}$ is $\rho=0$ or $\rho=1$. 


Defining 
\begin{align}
 s_c\Def - \left( 2\mathcal W_{-1} \left(\frac{-1}{2\sqrt{e}}\right)+1\right) \frac{P_r}{P_s} \approx 2.5129\frac{P_r}{P_s},
\end{align}
where $\mathcal W_{-1} \left(\cdot\right)$ is the -1 branch of the Lambert W-function, 
one can show that if $s \leq s_c$, then
\begin{align} \label{two_dist_single}
 \overline F_{\rho=0}(s) \geq \overline F_{\rho=1}(s). 
\end{align} 
In the remainder of the proof, we shall show that in case $\rho=1$, $s^o \leq s_c$. Then, as $\forall s \leq s_c$, $\overline F_{\rho=0}(s^o) \geq \overline F_{\rho=1}(s^o)$, it implies $\rho^o=0$, i.e., the optimum correlation coefficient between the relay signals maximizing the throughput of DF diamond channel is zero.

Assume that $s^o$ maximizes $\mathcal R(s)=\overline F_{\rho=1}(s) \ln \left( 1+P_s s  \right)$. Hence, ${\mathcal R^{\prime}}(s^o)=0$. Defining $f_{\rho=1}(s)=-\overline F_{\rho=1}^{\prime}(s)$, we get
\begin{align}
{\mathcal R^{\prime}}(s) = \overline F_{\rho=1} (s) \frac{P_s}{1+P_s s} - f_{\rho=1}(s) \ln \left( 1+P_s s  \right).
\end{align}

Let us define $g\left(s, P_s\right)=\ln \left( 1+P_s s \right)^{\frac{1+P_s s }{P_s}}$ and $r(s)=\frac{\overline F_{\rho=1}(s)}{f_{\rho=1}(s)}$. 
As such, we get
\begin{align} \label{diff_compair_trg}
\left\{
\begin{array}{lcl}
{\mathcal R^{\prime}}(s) > 0  & \text{iff} & r(s)>g\left(s, P_s\right), \\
{\mathcal R^{\prime}}(s) = 0  & \text{iff} & r(s)=g\left(s, P_s\right), \\
{\mathcal R^{\prime}}(s) < 0  & \text{iff} & r(s)<g\left(s, P_s\right). 
\end{array}
\right. 
\end{align}

Noting $\overline F_{\rho=1}(s)=\left( e^{-s}+2\left( 1-e^{-s} \right)e^{-\frac{P_s }{P_r}\frac{s}{2}} \right)e^{-\left( 1+\frac{P_s }{2P_r} \right)s}$, we have
\begin{align} 
r(s)\!=\!\frac{e^{-s}+2(1-e^{-s})e^{-s\frac{P_s}{2P_r}}}{\!\left(\! 2\!\!+\!\!\frac{P_s}{2P_r} \!\right)\!e^{-s}\!\!+\!\!2\left(\!1\!+\!\frac{P_s}{P_r}\!\right)\!e^{-s\frac{P_s}{2P_r}}\!\!-\!\!2\!\left(\!2\!+\!\frac{P_s}{P_r}\!\right)\!e^{-s}e^{-s\frac{P_s}{2P_r}}}.
\end{align}
It can be shown that as far as $s\geq s_t = \min\left\{2\frac{P_r}{P_s},1.212\right\}$, we have  
\begin{align} \label{lhs_diff_miso_single_Rayleigh3} 
r(s) < s, ~~~\forall s\geq s_t.
\end{align}  
The derivative of $g\left(s, P_s\right)$ over $P_s$ is 
\begin{align}
\frac{\partial g\left(s, P_s\right)}{\partial P_s} \!=\! \frac{s P_s\!-\!\ln\!\left(\! 1\!\!+\!\! s P_s \! \right)}{P_s^2} \!=\!
\frac{1}{P_s^2}\!\ln\!\! \left(\!\! 1\!+\!\frac{1}{1\!\!+\!\!s P_s}\!\! \sum_{k=2}^{\infty} \!\!\frac{\left(s P_s \right)^k}{k!} \!\right)\!\!\! \geq \!\!0.
\end{align}
Therefore, $g\left(s, P_s\right)$ is a monotonically increasing function of $P_s$ and its minimum is in $P_s=0$. As a result,
\begin{align}  \label{lhs_diff_miso_single_Rayleigh5}
 g\left(s, P_s\right) > \lim_{P_s \rightarrow 0} \ln \left( 1+ P_s s\right)^{\frac{1+P_s s}{P_s}} = s.
\end{align}
Comparing \cref{lhs_diff_miso_single_Rayleigh3}, \cref{lhs_diff_miso_single_Rayleigh5}, $r(0)=\frac{2P_r}{P_s}>0$ and $g\left(0, P_s\right)=0$ yields
\begin{align} \label{lhs_diff_miso_single_Rayleigh6}
\left\{
\begin{array}{ll}
r(s)>g\left(s, P_s\right) & s=0, \\
r(s)<g\left(s, P_s\right) & s\geq s_t.
\end{array}
\right. 
\end{align}
Applying \cref{lhs_diff_miso_single_Rayleigh6} to \cref{diff_compair_trg} gives
\begin{align} \label{diff_miso_single_Rayleigh1} 
\left\{
\begin{array}{ll}
{\mathcal R^{\prime}}(s)>0 & s=0, \\
{\mathcal R^{\prime}}(s)<0 & s\geq s_t.
\end{array}
\right. 
\end{align}

As $\mathcal R(s)$ is a continuous function, according to \cref{diff_miso_single_Rayleigh1}, $0< s^o < s_t$. Noting $s_t<s_c$, \cref{two_dist_single} yields $\overline F_{\rho=0}(s^o) > \overline F_{\rho=1}(s^o)$ and as a result, $\rho^o=0$ and $a=a_1+a_2$. Substituting the channel gain CDFs in \cref{pout111}, the maximum throughput of the DF diamond channel is given by \cref{second_hop_alamouti_rate22_2}, which is achievable by applying the aforementioned distributed space-time code.

\end{proof}

\subsection{Maximum Finite-Layer Expected-Rate} \label{DF_ml}

For the lucidity of this section, the encoding and decoding procedures are presented sparately.


\subsubsection{Encoding Procedure}
The transmitter sends a $K$-layer code $X = \sum_{i=1}^K \gamma_{i} X_{i}$ to the relays, where $\gamma_{i}^2$ represents the power allocated to the $i$'th layer with rate 
\begin{align} \label{rate_layer_3}
R_{i}=\ln \left( 1+\frac{\gamma_{i}^2 s_i}{1+\sum_{j=i+1}^K \gamma_{j}^2 s_i} \right).
\end{align}
The relays start decoding the received signal from the first layer up to the layer that their backward channel conditions allow. Then, the relays re-encode and forward the decoded layers to the destination. 
To design the transmission strategy, we first state \cref{df-multi-strategy-theorem}.

\begin{theorem} \label{df-multi-strategy-theorem}
In multi-layer DF, if the layers' power distribution in the first relay is equal to that of the second relay, the relay signals must be uncorrelated in order to achieve the maximum expected-rate.
\end{theorem}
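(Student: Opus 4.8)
The plan is to reduce Theorem~\ref{df-multi-strategy-theorem} to the single-layer result, Theorem~\ref{DF-single-theorem}, by exploiting the layered structure of the code. First I would set up notation: let the common power distribution at both relays assign power $P_i^{(r)}$ to the $i$'th layer (with $\sum_i P_i^{(r)} \le P_r$), and let $k_1, k_2 \in \{0,1,\dots,K\}$ denote the (random) number of layers successfully decoded by relay~1 and relay~2, respectively, where $k_\ell$ is determined by the backward channel gain $a_{r_\ell}$ through the thresholds $s_1 \ge s_2 \ge \cdots \ge s_K$ implicit in \cref{rate_layer_3}. Because both relays use the same per-layer power allocation, the first $\min(k_1,k_2)$ layers are transmitted \emph{coherently} from both relays over the second hop (each such layer appears at both relays and can be combined at the destination), while layers $\min(k_1,k_2)+1,\dots,\max(k_1,k_2)$ are transmitted from exactly one relay.

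The key step is to write the expected-rate as a sum over layers, $\mathcal R_f = \sum_{i=1}^K \mathcal P_i(R_i) R_i$, and to observe that for each fixed layer $i$, the decoding event at the destination has exactly the same structure as in the single-layer problem: layer $i$ is decoded iff it was decoded by at least one relay \emph{and} the accumulated second-hop channel gain from the relays that carry it exceeds the appropriate threshold. Concretely, conditioning on which relays decoded layer $i$, the relevant probability is governed by $\Pr\{\vec h \mathbf Q_i \vec h^\dag \ge \tau_i\}$ for the ``both-relays'' case and by $\Pr\{a_\ell \ge \tau_i\}$ for the ``single-relay'' cases, where $\mathbf Q_i = P_i^{(r)}\left[\begin{matrix} 1 & \rho_i \\ \rho_i & 1\end{matrix}\right]$ is the transmit covariance for layer~$i$ and $\tau_i$ is the layer-$i$ SINR threshold. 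The only $\rho_i$-dependence in $\mathcal P_i(R_i)$ sits in the ``both-relays'' term, through $\Pr\{\vec h \mathbf Q_i \vec h^\dag \ge \tau_i\}$, and the unitary-invariance argument from the proof of Theorem~\ref{DF-single-theorem} (diagonalizing $\mathbf Q_i = \mathbf U \mathbf D_i \mathbf U^\dag$ and invoking the rotational invariance of $\vec h$, hence the Jorswieck--Boch MISO outage-minimization result \cite{jorswieck}) shows this term is maximized at $\rho_i \in \{0,1\}$.

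Next I would argue, exactly as in the single-layer case, that $\rho_i = 0$ dominates $\rho_i = 1$: the function $\overline F_{\rho_i=0}(\cdot) \ge \overline F_{\rho_i=1}(\cdot)$ holds on the range of thresholds that can possibly be optimal, because the optimal per-layer threshold $s_i^o$ satisfies $s_i^o < s_t < s_c$ by the same derivative comparison (the function $\mathcal R(s)$ for each layer has a positive derivative at $0$ and a negative derivative beyond $s_t$). Since the layers decouple in the objective $\sum_i \mathcal P_i(R_i)R_i$ once the per-layer power allocation is fixed and equal across relays, one may optimize $\rho_i$ independently for each layer, obtaining $\rho_i^o = 0$ for every $i$; i.e., the relay signals are uncorrelated, which proves the theorem.

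The main obstacle I expect is verifying that the per-layer decoding events genuinely reduce to the single-layer form — in particular, handling the successive-decoding bookkeeping (a layer at the destination is decodable only after all lower layers are, and the interference from not-yet-decoded upper layers must be accounted for via $\tau_i$) and checking that the dominance $\overline F_{\rho=0} \ge \overline F_{\rho=1}$ is invoked only on the relevant threshold range for \emph{each} layer, not just the aggregate. A secondary subtlety is that the ``coherent addition'' claim requires the layer-$i$ codeword transmitted by relay~1 to be \emph{identical} (up to the space-time permutation) to that transmitted by relay~2 whenever both decode it, which is where the equal-power-distribution hypothesis is essential; without it the two relays would re-encode with different powers and coherent combining at the destination would not apply, so this hypothesis must be used explicitly in setting up the decoding-probability expressions.
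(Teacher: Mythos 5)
Your overall route is the paper's route: restrict the per-layer correlation to the extreme points $\{0,1\}$ via unitary invariance of $\vec h$, then show the optimal per-layer threshold satisfies $s_i^o < s_t < s_c$ so that the $\rho=0$ tail dominates at the optimizer. However, there is a genuine gap exactly at the point you flag as ``the main obstacle'' and then set aside. The destination's decoding event for layer $i$ in the ``both-relays'' case is not $\Pr\{\vec h \mathbf Q_i \vec h^\dag \ge \tau_i\}$ with a deterministic $\tau_i$: it is the SINR event
$\Pr\bigl\{ \vec h \mathbf Q_i \vec h^\dag \ge \tau_i \bigl(1+\vec h \bigl(\sum_{j>i}\mathbf Q_j\bigr)\vec h^\dag\bigr) \bigr\}$,
where the interference is a random quadratic form in the \emph{same} channel vector. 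Consequently the probability depends not only on $\rho_i$ but also on the correlation parameter of the aggregate upper-layer covariance $\sum_{j>i}\mathbf Q_j$ (call it $\hat\rho_i$), and the per-layer optimizations do not decouple as you claim, since $\hat\rho_i$ is determined by the $\rho_j$, $j>i$. For the same reason the Jorswieck--Boch result you invoke does not directly apply: it concerns interference-free MISO outage, whereas here one needs the extreme-point reduction for the two-parameter SINR event; the paper handles this by diagonalizing both $\mathbf Q_i$ and $\sum_{j>i}\mathbf Q_j$ (they share the eigenvector matrix $\mathbf U$) and citing its companion work \cite{zamani2012maximum} for the claim that the optimum is $\rho_i=\hat\rho_i=0$ or $\rho_i=\hat\rho_i=1$.

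The second place where ``the same derivative comparison'' cannot be invoked verbatim is the threshold-range argument. With interference present, the comparison function from the proof of Theorem~\ref{DF-single-theorem} must be replaced by
$g\left( s_i,P_i,I_i \right)=\frac{\left(1+I_i s_i\right)\left(1+\left( I_i+P_i\right) s_i\right)}{P_i} \ln \left( 1+\frac{P_i s_{i}}{1+I_i s_i} \right)$
together with $r(s_i)=-\mathcal P_i\big/\frac{\ud \mathcal P_i}{\ud s_i}$, and the key inequality $g\left( s_i,P_i,I_i \right)>s_i$ has to be re-established; the paper does this by writing $\hat P = \frac{P_i}{1+I_i s_i}$, using $I_i s_i \ge 0$, and reducing to the single-layer bound $\ln\left(1+\hat P s_i\right)^{\frac{1+\hat P s_i}{\hat P}} > s_i$. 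Only then do the analogues of \cref{lhs_diff_miso_single_Rayleigh6,diff_miso_single_Rayleigh1} give $0<s_i^o<s_t<s_c$ and hence $\rho_i^o=\hat\rho_i^o=0$. So your proposal is the right skeleton, but the reduction of the multi-layer SINR event to the single-layer form -- the two-parameter covariance structure and the modified $g$ -- is precisely the new content of this theorem and is missing from your argument.
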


\begin{proof}

Analogous to the proof of \cref{DF-single-theorem}, let us define 
\begin{align} \label{pmult111}
 \mathcal P_i &\Def \overline F_{\mathrm a_{r_1}}(s_i) \overline F_{\mathrm a_{r_2}}(s_i) \mathcal P_{i,1,2} \nonumber \\ 
&+ \overline F_{\mathrm a_{r_1}}(s_i) F_{\mathrm a_{r_2}}(s_i) \mathcal P_{i,1} 
+ F_{\mathrm a_{r_1}}(s_i) \overline F_{\mathrm a_{r_2}}(s_i) \mathcal P_{i,2},
\end{align}
where $\mathcal P_{i,1,2}$, $\mathcal P_{i,1}$, and $\mathcal P_{i,2}$ are the probability of decoding the $i$'th layer at the destination when both relays, only the first relay, and only the second relay decode the signal, respectively.
The expected-rate in the $i$'th layer can be written as
\begin{align} \label{pmult12}
 \mathcal R_i(s) = \mathcal P_i \ln \left( 1+\frac{\gamma_{i}^2 s_i}{1+\sum_{j=i+1}^K \gamma_{j}^2 s_i } \right).
\end{align}

The only term in \cref{pmult111} which depends on the transmit strategy at the relays is $\mathcal P_{i,1,2}$. We denote $\mathbf Q_i$ as the transmit covariance matrix of the relays in the $i$'th layer. So that,
\begin{align}
\mathcal P_{i,1,2} = \Pr \left\{ 1+\frac{\vec h \mathbf Q_i \vec h^\dag}{1+\vec h \sum_{j=i+1}^K \mathbf Q_j \vec h^\dag} \geq R_i  \right\}.
\end{align}
Analogous to the proof of \cref{DF-single-theorem}, by decomposing $\mathbf Q_i$ and $\sum_{j=i+1}^K \mathbf Q_j$, and noting the fact that multiplying $\vec h$ by any unitary matrix does not change the distribution of $\vec h$, we get
\begin{align} \label{multi-df-th-pr-1}
\mathcal P_{i,1,2} = \Pr \left\{ 1+\frac{P_i \vec h \begin{bmatrix} 1+\rho_i & 0 \\ 0 & 1-\rho_i \end{bmatrix} \vec h^\dag}{1+I_i \vec h \begin{bmatrix} 1+\hat \rho_i & 0 \\ 0 & 1-\hat \rho_i \end{bmatrix} \vec h^\dag} \geq R_i  \right\}.
\end{align}
It can be shown that the optimum solutions for $\rho$ and $\hat \rho$ to minimize $\mathcal P_{i,1,2}$ in \cref{multi-df-th-pr-1} is either $\rho_i=\hat \rho_i=0$ or $\rho_i=\hat \rho_i=1$ \cite{zamani2012maximum}. We shall now show that the optimum solution is $\rho_i^o=\hat \rho_i^o=0$. Towards this, we follow the same general outline to the proof of \cref{DF-single-theorem}.

Let us define the following functions,
\begin{align}
&g\left( s_i,P_i,I_i \right)=\frac{\left(1+I_i s_i\right)\left(1+\left( I_i+P_i\right) s_i\right)}{P_i} \ln \left( 1+\frac{P_i 
s_{i}}{1+I_i s_i} \right), \\
&r(s_i)=-\frac{\mathcal P_i}{\frac{\ud \mathcal P_i}{\ud s_i}}.
\end{align}
One can simply show that \cref{diff_compair_trg,lhs_diff_miso_single_Rayleigh3} still hold by redefining the functions as above, and with $s$ replaced by $s_i$.
 
Defining $\hat P \Def \frac{P_i}{1+I_i s_i}$, from \cref{lhs_diff_miso_single_Rayleigh5} and noting $I_i s_i \geq 0$, we have
\begin{align}
g\left( s_i,P_i,I_i \right) &= \left( 1+I_i s_i \right) \frac{\left(1+\frac{P_i s_i
}{1+I_i s_i} \right)}{\frac{P_i}{1+I_i s_i}} \ln \left( 1+\frac{P_i s_i
}{1+I_i s_i} \right) \nonumber \\
& {\geq} \frac{\left(1+\frac{P_i s_i
}{1+I_i s_i} \right)}{\frac{P_i}{1+I_i s_i}} \ln \left( 1+\frac{P_i s_i
}{1+I_i s_i} \right) \nonumber \\
&=\ln \left( 1+\hat P s_i \right)^{\frac{\left(1+\hat P s_i \right)}{\hat P}}
> s_i.
\end{align}
Therefore, \cref{lhs_diff_miso_single_Rayleigh6,diff_miso_single_Rayleigh1} still hold with the above functions, and then, $0 < s_i^o < s_t$.
Noting $s_t<s_c$ results because as pointed out earlier $\mathcal P_{i,\rho=0}(s_i^o) > \mathcal P_{i,\rho=1}(s_i^o)$.

\end{proof}
With respect to \cref{df-multi-strategy-theorem}, the following transmission scheme is proposed. 
Assume that the first and the second relays decode $M$ and $N$ layers out of the whole $K$ transmitted layers, respectively, according to their corresponding backward channel. 
As the relays do not know the channel of the other relay, and hence, do not know the layers' power distribution in the other relay, its code construction is based on a similar power distribution assumption for the other relay.
\Cref{df-multi-strategy-theorem} demonstrates that uncorrelated signals must be transmitted over the relays. For this purpose, the following scheme is proposed. At time $t$, the first relay sends $\sum_{i=1}^K \alpha_{i} X_{i} (t)$ while the other relay sends $\sum_{i=1}^K \beta_{i} X_{i} (t+1)$. At time $t+1$, the first and the second relays send $\sum_{i=1}^K - \alpha_{i} X_{i}^{*} (t+1)$ and $\sum_{i=1}^K \beta_{i} X_{i}^{*} (t)$, respectively. Note that $\sum_{i=1}^M \alpha_i^2 = P_r$, $\alpha_{i}=0$ for $i=M+1,...,K$ and $\sum_{i=1}^N \beta_i^2 = P_r$, $\beta_{i}=0$ for $i=N+1,...,K$.

The received signal at the destination is
\begin{align} \label{alam1-DF}
\begin{cases}
Y(t)=h_{1}\sum_{i=1}^K\alpha_{i}X_{i}(t)+h_{2}\sum_{i=1}^K \beta_{i}X_{i} (t+1)+Z(t), \\
Y\!(\!t\!+\!1\!)\!=\!\!-h_{1}\!\sum_{i=1}^K\! \alpha_{i}X_{i}^{*}\!(\!t\!+\!1\!)\!+\!h_{2}\!\sum_{i=1}^K \!\beta_{i}X_{i}^{*}\! (t)\!+\!Z(\!t\!+\!1\!).
\end{cases}  
\end{align}
One may express a matrix representation for \cref{alam1-DF} as 
\begin{align} \label{alam4-DF}
 \begin{bmatrix}
  Y(t) \\
\!-Y^*(t\!+\!1)\!
\end{bmatrix}
\!\!=\!\!
\sum_{i=1}^K
\begin{bmatrix}
 h_{1}\alpha_{i} & h_{2}\beta_{i} \\
-h_{2}^{*}\beta_{i} & h_{1}^{*}\alpha_{i}
\end{bmatrix}
\!\! \begin{bmatrix}
  X_{i}(t) \\
\!X_{i}(t\!+\!1)\!
\end{bmatrix}
\!\!+\!\!
 \begin{bmatrix}
  Z(t) \\
\!-Z^{*}(t\!+\!1)\!
\end{bmatrix}.
\end{align}

\subsubsection{Decoding procedure} 

The destination starts decoding the code layers in order, from the first layer up to the highest layer that is decodable. To decode the $i$'th layer, after decoding the first $i-1$ layers, the channels are separated into two parallel channels by multiplying both sides of \cref{alam4-DF} by $\begin{bmatrix}
 h_{1}^{*}\alpha_{i} & -h_{2}\beta_{i} \\
h_{2}^{*}\beta_{i} & h_{1}\alpha_{i}
\end{bmatrix}$. Therefore,
\begin{align}
 &\begin{bmatrix}
  \tilde{Y}(t) \\
\tilde{Y}(t+1)
 \end{bmatrix}
= 
\begin{bmatrix}
 a_1\alpha_{i}^{2}+a_2\beta_{i}^{2} & 0 \\
0 & a_1\alpha_{i}^{2}+a_2\beta_{i}^{2}
\end{bmatrix}
\begin{bmatrix}
  X_{i}(t) \\
X_{i}(t+1)
 \end{bmatrix} +
\nonumber \\
&
\sum_{j=i+1}^K 
\!\!\!\begin{bmatrix}
 h_{1}^{*}\alpha_{i} & -h_{2}\beta_{i} \\
h_{2}^{*}\beta_{i} & h_{1}\alpha_{i}
\end{bmatrix}
\!\!\begin{bmatrix}
 h_{1}\alpha_{j} & h_{2}\beta_{j} \\
-h_{2}^{*}\beta_{j} & h_{1}^{*}\alpha_{j}
\end{bmatrix}
\!\!\begin{bmatrix}
  X_{j}(t) \\
\!X_{j}(t\!+\!1)\!
 \end{bmatrix} \!\!+\!\!
\begin{bmatrix}
  \tilde Z(t) \\
\!\tilde Z(t\!+\!1)\!
 \end{bmatrix}.
\end{align}
$\tilde Z(t)$ and $\tilde Z(t+1)$ are two independent i.i.d AWGN, each with power $a_1  \alpha_i^2+a_2 \beta_i^2$.



The interference power caused by upper layers while decoding the $i$'th layer is
\begin{align} \label{cdf_layer_DF_ml_1}
I_i&=
\sum_{j=i+1}^K \left( \left( a_1 \alpha_i \alpha_j + a_2 \beta_i \beta_j \right)^2 + a_1 a_2 \left( \alpha_i \beta_j - \alpha_j \beta_i  \right)^2 \right) \nonumber \\
&=\left( a_1 \alpha_i^2 +a_2 \beta_i^2 \right) \sum_{j=i+1}^K \left( a_1 \alpha_j^2 +a_2 \beta_j^2 \right).
\end{align}
Thus, the probability that the $i$'th layer can be successfully decoded at the destination is
{\small \begin{align}
\mathcal P_i=
\Pr \left\{ \frac{ a_1 \alpha_i^2 + a_2 \beta_i^2 }{1 + \sum_{j=i+1}^K \left(  a_1 \alpha_j^2 + a_2 \beta_j^2 \right)} \geq \frac{\gamma_{i}^2 s_i}{1+\sum_{j=i+1}^K \gamma_{j}^2 s_i} \right\}.
\end{align}}
Hence, the achievable expected-rate using this scheme can be written as
\begin{align}
 \mathcal R_{D,f}=
\sum_{i=0}^K \mathcal{P}_i \ln \left( 1+\frac{\gamma_{i}^2 s_i}{1+\sum_{j=i+1}^K \gamma_{j}^2 s_i} \right).
\end{align}
To summarize, we have shown the following.

\begin{theorem} 
In the diamond channel, the above result implies that the following expected-rate is achievable.
\begin{align}
 \mathcal R_{D,f}^m=\max_{\begin{subarray}{c}
s_i, \gamma_i, \alpha_i, \beta_i
\end{subarray}} \sum_{i=0}^K \mathcal P_i \ln \left( 1+\frac{\gamma_{i}^2 s_i}{1+\sum_{j=i+1}^K \gamma_{j}^2 s_i} \right),
\end{align}
with $\mathcal P_i=
\Pr \left\{ \frac{ \left| h_1 \right|^2 \alpha_i^2 + \left|h_2 \right|^2 \beta_i^2}{1+ \sum_{j=i+1}^K \left( \left| h_1 \right|^2 \alpha_j^2 + \left|h_2 \right|^2 \beta_j^2 \right)} \geq \frac{\gamma_{i}^2 s_i}{1+\sum_{j=i+1}^K \gamma_{j}^2 s_i} \right\}$.
The maximization is subject to $\sum_{i=1}^K \gamma_{i}^2=P_{s}$, $\sum_{i=1}^K \alpha_{i}^2=\sum_{i=1}^K \beta_{i}^2=P_{r}$, where $\alpha_i$ and $\beta_i$ are zero for the layers which are not decoded at the relays.
Note that $\alpha_i$s and $\beta_i$s are optimized separately.
\end{theorem}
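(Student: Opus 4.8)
The plan is to assemble the theorem directly from the encoding and decoding procedures just described, checking that each operation is admissible and that the resulting per-layer success events are exactly the ones appearing in the $\mathcal P_i$ of the statement. First I would fix the source codebook as the $K$-layer superposition $X=\sum_{i=1}^K\gamma_i X_i$ with per-layer rates $R_i$ given by \cref{rate_layer_3} and power split $\sum_i\gamma_i^2=P_s$. Because the relays decode this layered code successively from the bottom up, the $\ell$-th relay recovers layers $1,\dots,M$ (resp.\ $1,\dots,N$) exactly when its backward gain $a_{r_\ell}$ clears the thresholds $s_1\le\cdots\le s_M$ (resp.\ $s_N$); setting $\alpha_i=0$ for $i>M$ and $\beta_i=0$ for $i>N$ encodes precisely this, while $\sum_i\alpha_i^2=\sum_i\beta_i^2=P_r$ enforces the relay power budgets. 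By \cref{df-multi-strategy-theorem}, transmitting these uncorrelated layered signals through the distributed Alamouti-type code of \cref{alam1-DF} and \cref{alam4-DF} is the optimal choice among strategies that use a common layer power profile at the two relays, so no generality is lost there.

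Next I would verify the destination processing. After the first $i-1$ layers have been decoded and subtracted, left-multiplying \cref{alam4-DF} by the separating matrix of the decoding step diagonalizes the $i$-th layer term with effective gain $a_1\alpha_i^2+a_2\beta_i^2$ on both branches. The crucial algebraic point is the residual interference from layers $j>i$: expanding the product of the two Alamouti blocks and summing the squared magnitudes of the entries of each column produces $\bigl(a_1\alpha_i\alpha_j+a_2\beta_i\beta_j\bigr)^2+a_1a_2\bigl(\alpha_i\beta_j-\alpha_j\beta_i\bigr)^2$, which collapses to $(a_1\alpha_i^2+a_2\beta_i^2)(a_1\alpha_j^2+a_2\beta_j^2)$ since the cross terms $2a_1a_2\alpha_i\beta_i\alpha_j\beta_j$ cancel; this is \cref{cdf_layer_DF_ml_1}. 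I expect this bookkeeping --- together with checking that the processed noise $\tilde Z(t),\tilde Z(t+1)$ remains white, Gaussian, and of variance $a_1\alpha_i^2+a_2\beta_i^2$, which follows from the orthogonality of the columns of the Alamouti block so that the linear map preserves independence --- to be the only real obstacle, and it is entirely mechanical.

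With layer $i$ thus reduced to a scalar Gaussian channel of signal-to-interference-plus-noise ratio $\tfrac{a_1\alpha_i^2+a_2\beta_i^2}{1+\sum_{j>i}(a_1\alpha_j^2+a_2\beta_j^2)}$, it is decodable iff this SINR meets the threshold $\tfrac{\gamma_i^2 s_i}{1+\sum_{j>i}\gamma_j^2 s_i}$ dictated by $R_i$, giving success probability $\mathcal P_i$ exactly as stated (with $a_\ell=|h_\ell|^2$). Since decoding proceeds strictly upward and each decoded layer is removed before the next, layer $i$ delivers rate $R_i$ whenever its event of probability $\mathcal P_i$ occurs, so the achievable expected-rate is $\sum_i\mathcal P_i R_i=\mathcal R_{D,f}$, in the sense of \cref{finite-layer-expected-rate-definition-formula}. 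Finally I would maximize over $s_i,\gamma_i,\alpha_i,\beta_i$ subject to $\sum_i\gamma_i^2=P_s$ and $\sum_i\alpha_i^2=\sum_i\beta_i^2=P_r$; because each relay is oblivious to the other's channel, the $\alpha_i$'s and the $\beta_i$'s enter the objective only through their own marginal decoding events and can therefore be optimized separately, which yields the stated $\mathcal R_{D,f}^m$.
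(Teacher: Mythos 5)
Your proposal is correct and follows essentially the same route as the paper: the theorem is simply the summary of the constructive scheme (layered source code, threshold-based decoding at the relays, the distributed Alamouti-type transmission with $\alpha_i,\beta_i$, the separating matrix at the destination, the interference identity of \cref{cdf_layer_DF_ml_1}, and the resulting SINR condition giving $\mathcal P_i$), which is exactly what you verify. The only loose phrase is your claim that the $\alpha_i$'s and $\beta_i$'s ``enter the objective only through their own marginal decoding events''---$\mathcal P_i$ in fact couples them---but since the statement is an achievability claim, separate optimization (forced anyway by each relay's ignorance of the other's channel) still yields the stated rate, just as in the paper.
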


\begin{remark} 
One important feature of the proposed scheme is that the layers being decoded at both relays are added coherently at the destination although each relay has no information about the number of layers being successfully decoded by the other relay.
\end{remark}



\section{Amplify-Forward Relays}   \label{AF-multi-layer}

A simple but efficient relaying
solution for the diamond channel is to amplify and forward the received
signals. In order for the destination to coherently decode the signals, it employs a distributed
space-time code permutation along with the threshold-based ON/OFF power scheme, which has been shown that improves the performance of AF relaying \cite{hua}. 
According to the ON/OFF concept, any relay whose backward channel gain is less
than a pre-determined threshold, namely $a_{th}$, is silent.
In this scheme, the relays transmit the signals to the destination in two consecutive time slots. In time slot $t$, the first (resp.\ second) relay transmits $c_1 Y_{r_1}(t)$ (resp.\ $c_2 Y_{r_2}(t+1)$). In time slot $t+1$, the first 
(resp.\ second) relay transmits $-c_1
Y_{r_1}^*(t+1)$ (resp.\ $c_2 Y_{r_2}^*(t)$) with the \emph{backward channel
phase compensation} \cite{hua}. To satisfy the relays' power constraint, it is required that $c_\ell =
\sqrt{\frac{\mathcal U\left(a_{r_\ell}-a_{th}\right) P_r}{a_{r_\ell} P_s+1}}$, $\ell=1, 2$, where $\mathcal U(\cdot)$ is the unit step function.
At the destination, the channels are parallelized using the Alamouti decoding procedure \cite{alamouti}. 
The received signal at the destination is 
\begin{align}
\begin{cases}
 Y(t){=} c_1 h_{1}(t)Y_{r_1}(t)+c_2 h_{2}(t)Y_{r_2}(t+1)+Z(t), \\
Y(t+1){=}\! -c_1 h_{1}(t)Y_{r_1}^{*}(t+1)\! +\!c_2 h_{2}(t)Y_{r_2}^*(t) \!+\!Z(t+1)
.
\end{cases}
\end{align}
As the destination accesses the backward channels, after compensating the phases of $h_{r_1}$ and $h_{r_2}$ into $h_{r_1}^*$ and $h_{r_2}^*$ in time slot $t+1$, we get
\begin{align} \label{AF-single-matrix-decoder}
 \begin{bmatrix}
  Y(t) \\
-Y^*(t+1)
 \end{bmatrix}
=
\begin{bmatrix}
 h_{r_1}h_{1}c_1 & h_{r_2}h_{2}c_2 \\
-h_{r_2}^*h_{2}^*c_2 & h_{r_1}^*h_{1}^*c_1
\end{bmatrix}
\begin{bmatrix}
  X(t) \\
X(t+1)
 \end{bmatrix}
 \nonumber \\
+
\begin{bmatrix}
 c_1 h_{1} Z_{r_1}(t)+c_2 h_{2}Z_{r_2}(t+1) +Z(t)    \\
c_1 h_{1}^* Z_{r_1}(t+1) -c_2 h_{2}^* Z_{r_2}(t) -Z^*(t+1)
\end{bmatrix}.
\end{align}

Multiplying $\begin{bmatrix}
 h_{r_1}h_{1}c_1 & h_{r_2}h_{2}c_2 \\
-h_{r_2}^*h_{2}^*c_2 & h_{r_1}^*h_{1}^*c_1
\end{bmatrix}^\dag
$ to both sides of \cref{AF-single-matrix-decoder}, two channels are parallelized, and the source-destination instantaneous mutual information is
\begin{align}
 \mathcal I\left(X ; Y\right) = \ln \left( 1+\frac{\left( |h_{r_1}h_{1}|^2c_1^2+|h_{r_2}h_{2}|^2 c_2^2 \right) P_s }{1+|h_{1}|^2c_1^2+|h_{2}|^2c_2^2} \right),
\end{align}
which is equivalent to a point-to-point channel with the following channel gain,
\begin{align} \label{AF-equivalent-2-channel-gain}
a_{AF,2}\Def \frac{\frac{P_r}{a_{r_1} P_{s}+1}a_{r_1}a_{1}+\frac{P_r}{a_{r_2} P_{s}+1}a_{r_2}a_{2}}{1+\frac{P_r}{a_{r_1} P_{s}+1}a_{1}+\frac{P_r}{a_{r_2} P_{s}+1}a_{2}}.
\end{align}
If one relay is silent and only one relay transmits, let say the $\ell$'th relay, by replacing zero instead of one of the channel gains into \cref{AF-equivalent-2-channel-gain}, we get
\begin{align} \label{AF-equivalent-1-channel-gain}
a_{AF,1}\Def \frac{a_{r_\ell}a_{\ell} P_{r}}{1+a_{r_\ell} P_s+a_{\ell} P_r}.
\end{align}
The expected value of the optimum ON/OFF threshold in which $a_{AF,2}>a_{AF,1}$ is given by
\begin{align} \label{optimum-threshold-AF}
 a_{th} = \frac{P_r}{1+P_s+P_r}.
\end{align}


\Cref{AF_throughput} yields the maximum achievable throughput in this method. 

\begin{proposition} \label{AF_throughput}
The maximum achievable throughput in the above AF scheme is specified by
\begin{align}
 \mathcal R_{A,s}^{m}= \max_{s} \,\,\, & e^{-\frac{P_r}{1+P_s+P_r}}  \left( e^{-\frac{P_r}{1+P_s+P_r}} \overline F_{\mathrm a_{AF,2}}(s) \right. \nonumber \\ 
 &\left. + 2\left( 1 - e^{-\frac{P_r}{1+P_s+P_r}} \right) \overline F_{\mathrm a_{AF,1}}(s) \right) \ln (1+{P}_{s} s),
\end{align}
where 
$F_{\mathrm a_{AF,2}}(\cdot)$ and $F_{\mathrm a_{AF,1}}(\cdot)$ are the CDFs of $a_{AF,2}$ and $a_{AF,1}$ from \cref{AF-equivalent-2-channel-gain,AF-equivalent-1-channel-gain}, respectively.
\end{proposition}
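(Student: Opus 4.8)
The plan is to parametrize the source rate, reduce the two‑relay AF network to a scalar fading channel conditioned on which relays are active, and then average over the ON/OFF events. First I would write the source rate as $R=\ln(1+P_s s)$ for a parameter $s>0$; by the definition in \cref{throughput-definition-formula} the maximum throughput is $\max_{s}\mathcal P(R)\,R$, where $\mathcal P(R)$ is the probability of successful decoding at the destination. From the Alamouti‑type combining carried out just before the statement, after the space–time permutation the source–destination link is equivalent to a point‑to‑point channel whose instantaneous gain is $a_{AF,2}$ of \cref{AF-equivalent-2-channel-gain} when both relays are ON and is $a_{AF,1}$ of \cref{AF-equivalent-1-channel-gain} when exactly one relay is ON (the silent relay amounting to zeroing one forward gain in \cref{AF-equivalent-2-channel-gain}); when both relays are OFF the destination sees only noise. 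Since in each case $\mathcal I(X;Y)$ equals $\ln(1+P_s a_{AF,2})$ or $\ln(1+P_s a_{AF,1})$ accordingly, the rate $R=\ln(1+P_s s)$ is decoded iff the relevant equivalent gain is at least $s$.

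Next I would apply the law of total probability over the ON/OFF status of the two relays. Relay $\ell$ is ON iff $a_{r_\ell}\ge a_{th}$, and since $a_{r_\ell}$ is unit‑mean exponential and independent across $\ell$, both relays are ON with probability $e^{-2a_{th}}$, exactly one relay is ON with probability $2e^{-a_{th}}\!\left(1-e^{-a_{th}}\right)$, and neither is ON with probability $\left(1-e^{-a_{th}}\right)^{2}$, contributing $0$. The two ``exactly one ON'' configurations are statistically identical, so each contributes the same conditional success probability $\overline F_{\mathrm a_{AF,1}}(s)$; likewise $\overline F_{\mathrm a_{AF,2}}(s)$ denotes the conditional success probability given that both relays exceed $a_{th}$. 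Collecting terms,
\begin{align*}
\mathcal P(R)=e^{-2a_{th}}\,\overline F_{\mathrm a_{AF,2}}(s)+2e^{-a_{th}}\!\left(1-e^{-a_{th}}\right)\overline F_{\mathrm a_{AF,1}}(s)
=e^{-a_{th}}\!\left(e^{-a_{th}}\overline F_{\mathrm a_{AF,2}}(s)+2\!\left(1-e^{-a_{th}}\right)\overline F_{\mathrm a_{AF,1}}(s)\right).
\end{align*}
Substituting the threshold $a_{th}=\frac{P_r}{1+P_s+P_r}$ prescribed by the scheme in \cref{optimum-threshold-AF} and $R=\ln(1+P_s s)$ into $\mathcal R_{A,s}=\mathcal P(R)\,R$ and maximizing over $s>0$ reproduces the claimed formula.

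The step that needs care is the factorization in the second paragraph: the backward gains $a_{r_1},a_{r_2}$ simultaneously determine the ON/OFF events and appear inside $a_{AF,2}$ and $a_{AF,1}$, so the decomposition into (probability of the active set)$\times$(complementary CDF of the equivalent gain) must be performed on the joint events, and the CDFs $F_{\mathrm a_{AF,2}},F_{\mathrm a_{AF,1}}$ appearing in the statement must be read as the corresponding conditional laws rather than the unconditional ones. The remaining ingredients — the exponential tail $\Pr\{a_{r_\ell}\ge a_{th}\}=e^{-a_{th}}$, the two‑fold relay symmetry, and the degeneration of \cref{AF-equivalent-2-channel-gain} to \cref{AF-equivalent-1-channel-gain} when one relay is silent — are routine, so I expect the write‑up beyond the conditioning bookkeeping to be short.
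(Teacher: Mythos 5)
Your proposal is correct and follows essentially the same route the paper takes: the paper states the proposition without proof, but its proof of the continuous-layer result (\cref{AF_expected}, see \cref{on-off1}) uses exactly this decomposition — conditioning on the ON/OFF events with probabilities $e^{-2a_{th}}$ and $2e^{-a_{th}}(1-e^{-a_{th}})$, weighting by the complementary CDFs of the equivalent gains $a_{AF,2}$ and $a_{AF,1}$, and substituting $a_{th}=\frac{P_r}{1+P_s+P_r}$. Your remark that $F_{\mathrm a_{AF,2}}$ and $F_{\mathrm a_{AF,1}}$ must be read as the laws conditioned on the relevant relays being ON (since $a_{r_1},a_{r_2}$ enter both the ON/OFF events and the equivalent gains) is a valid clarification of the paper's notation, not a departure from its argument.
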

The maximum continuous-layer expected-rate of the above AF relaying is presented in \cref{AF_expected}.

\begin{theorem} \label{AF_expected}
{The maximum achievable expected-rate in the above AF relaying is given by
\begin{align}
 \mathcal R_{A,c}^m{=}e^{-\frac{P_r}{1+P_s+P_r}}\!\!\left(\! 2-e^{-\frac{P_r}{1+P_s+P_r}}\! \right)\!\! \int\limits_{s_{0}}^{s_{1}} \! \overline F(s) \left(\! \frac{2}{s}\! +\! \frac{ f^{\prime}(s)}{f(s)}\! \right) \emph{d} s,
\end{align}
with
\begin{align} \label{feq3}
 F(s)&\Def \frac{2\left( e^{\frac{P_r}{1+P_s+P_r}}-1\right) F_{\mathrm a_{AF,1}}(s)+ F_{\mathrm a_{AF,2}}(s) }{2e^{\frac{P_r}{1+P_s+P_r}}-1}, \\
f(s) &\Def F^{\prime}(s).
\end{align}
The integration limits are the solutions to $\overline F(s_0)=s_0(1+P_s s_0)f(s_0)$ and $\overline F(s_1)=s_1f(s_1)$, respectively.
}
\end{theorem}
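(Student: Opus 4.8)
The plan is to collapse this AF scheme into an equivalent single-user scalar block-fading channel and then apply the optimal multi-layer (broadcast) power allocation of Shamai and Steiner \cite{shamai,steiner}. After the distributed space--time permutation and the Alamouti-type separation at the destination (as in \cref{AF-single-matrix-decoder}), the source--destination link becomes a scalar Gaussian channel whose receiver-known power gain $\mathrm a_{AF}$ is a mixture: with probability $\left(1-e^{-a_{th}}\right)^2$ both relays are silent and $\mathrm a_{AF}=0$; with probability $2e^{-a_{th}}\left(1-e^{-a_{th}}\right)$ exactly one relay is active and $\mathrm a_{AF}=a_{AF,1}$ of \cref{AF-equivalent-1-channel-gain}; and with probability $e^{-2a_{th}}$ both relays are active and $\mathrm a_{AF}=a_{AF,2}$ of \cref{AF-equivalent-2-channel-gain}, where $a_{th}=\frac{P_r}{1+P_s+P_r}$ is the threshold of \cref{optimum-threshold-AF}. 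The ON/OFF rule is local to each relay, so no coordination is needed. Hence a code layer requiring effective gain at least $s$ is decoded with probability $\Pr\{\mathrm a_{AF}\ge s\}=e^{-2a_{th}}\overline F_{\mathrm a_{AF,2}}(s)+2e^{-a_{th}}\left(1-e^{-a_{th}}\right)\overline F_{\mathrm a_{AF,1}}(s)$, and a short rearrangement shows this equals $e^{-a_{th}}\left(2-e^{-a_{th}}\right)\overline F(s)$ with $F$ the CDF of \cref{feq3}.

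I would then invoke the broadcast-approach machinery on this scalar channel. Index the continuum of layers by the gain threshold $u\ge 0$, let $I(u)$ be the power carried by the layers with threshold above $u$ (so $I$ is non-increasing, $I(0)\le P_s$, $I(\infty)=0$), and recall that the layer at level $u$ has differential rate $\frac{-u\,I'(u)}{1+uI(u)}\,\ud u$ and is decoded if and only if $\mathrm a_{AF}\ge u$. Therefore
\begin{align}
\mathcal R_{A,c}=e^{-a_{th}}\left(2-e^{-a_{th}}\right)\int_0^\infty \overline F(u)\,\frac{-u\,I'(u)}{1+uI(u)}\,\ud u,
\end{align}
and, as in \cite{shamai}, the optimal power distribution is obtained by calculus of variations: the Euler--Lagrange equation of this functional simplifies to $u\,\overline F'(u)\left(1+uI(u)\right)=-\overline F(u)$, so that $1+uI(u)=\frac{\overline F(u)}{uf(u)}$ and $I(u)=\frac{\overline F(u)}{u^2f(u)}-\frac1u$ wherever this expression lies in $(0,P_s)$.

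Next I would fix the support and substitute back. Since the unconstrained $I(u)$ diverges as $u\to0^+$ (there $\overline F\to1$ while $f$ stays positive), the cap $I\le P_s$ is active near the origin, so $I\equiv P_s$ below the point $s_0$ solving $I(s_0)=P_s$, i.e.\ $\overline F(s_0)=s_0\left(1+P_s s_0\right)f(s_0)$; similarly $I\equiv0$ above the point $s_1$ solving $I(s_1)=0$, i.e.\ $\overline F(s_1)=s_1f(s_1)$. On $[s_0,s_1]$, using $1+uI=\overline F/(uf)$ together with $-uI'$ computed from the closed form of $I$, the differential rate collapses to $\left(\frac2u+\frac{f'(u)}{f(u)}\right)\ud u$; integrating $\overline F(u)$ against it over $[s_0,s_1]$ (layers outside this interval carry no power) yields exactly the claimed expression for $\mathcal R_{A,c}^m$.

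The main obstacle is the variational step: one must argue that the Euler--Lagrange solution, glued to the boundary values $I\equiv P_s$ and $I\equiv0$, is the constrained \emph{global} maximizer rather than merely a stationary point, in particular handling the inequality constraint $I\le P_s$ by a KKT-type argument to pin down $s_0$ and verifying $\frac2u+\frac{f'(u)}{f(u)}\ge0$ on $[s_0,s_1]$ so that the optimal profile is genuinely non-increasing there. A secondary technical point is regularity of $f$: it must be continuous and strictly positive on $[s_0,s_1]$ and near $0$ for the substitutions and the divergence argument to be valid, which holds because conditioning on a relay being active bounds its backward gain below by $a_{th}>0$, so $a_{AF,1}$ and $a_{AF,2}$ have well-behaved densities at the origin.
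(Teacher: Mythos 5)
Your proposal is correct and follows essentially the same route as the paper: both reduce the ON/OFF AF scheme to an equivalent scalar block-fading channel whose decoding probability is $e^{-a_{th}}\left(2-e^{-a_{th}}\right)\overline F(s)$ with the mixture CDF of \cref{feq3}, and then apply the Shamai--Steiner continuous-layer broadcast optimization via the Euler equation with the boundary conditions $I(s_0)=P_s$ and $I(s_1)=0$. The only cosmetic difference is that the paper first writes the rate as a weighted sum of the one-relay and two-relay conditional functionals before merging them into a single $F(s)$, whereas you build the mixture distribution directly; your added remarks on global optimality and regularity of $f$ are sound but not needed beyond what the cited variational results already provide.
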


\begin{proof}

The maximum achievable expected-rate at the destination can be expressed by
\begin{align} \label{on-off1}
 \mathcal R_{A,c}^m&{=}2 e^{-a_{th}} \left( 1-e^{-a_{th}}  \right) \mathcal R_1^m+  e^{-2a_{th}} \mathcal R_{2}^m ~~~~~ \nonumber \\
&{=} e^{-a_{th}}\! \left( 2-e^{-a_{th}} \! \right)\!\! \left(\! \frac{2\left(1\!-\!e^{-a_{th}}\right)}{2\!-\!e^{-a_{th}}} \mathcal R_{1}^m\! +\! \frac{e^{-a_{th}}}{2\!-\!e^{-a_{th}}} \mathcal R_{2}^m \! \right),
\end{align}
where $\mathcal R_{1}^m$ and $\mathcal R_{2}^m$ are the maximum expected-rates when only one relay is active and both relays are active, respectively. 
As showed in \cite{shamai,steiner2007multi}, $\mathcal R_1^m$ and $\mathcal R_2^m$ are given by 
\begin{align} \label{infinite-layer-general-shamai_2}
\mathcal R_1^m = \max_{I(s)} \int_{0}^{\infty} \overline F_{\mathrm a_{AF,1}}(s) \frac{-s I^{\prime}(s)}{1+sI(s)} \ud s, \nonumber \\
\mathcal R_2^m = \max_{I(s)} \int_{0}^{\infty} \overline F_{\mathrm a_{AF,2}}(s) \frac{-s I^{\prime}(s)}{1+sI(s)} \ud s.
\end{align}

Substituting the above equations in \cref{on-off1}, we get
\begin{align}\label{on-off4}
\mathcal R_{A,c}^m
&{=}\max_{I(s)} e^{-a_{th}}\!\left( 2-e^{-a_{th}}  \right)\!\! \int\limits_{0}^{\infty} \!\! \left( 1-\frac{2\left(1-e^{-a_{th}}\right)}{2-e^{-a_{th}}} F_{\mathrm a_{AF,1}}(s)\right.   \nonumber \\
&- \left. \frac{e^{-a_{th}}}{2-e^{-a_{th}}} F_{\mathrm a_{AF,2}}(s) \right) \frac{-x I^{\prime}(s)}{1+sI(s)} \ud s.
\end{align}
Defining 
\begin{align}
F(s)\Def \frac{2\left(1-e^{-a_{th}}\right)}{2-e^{-a_{th}}} F_{\mathrm a_{AF,1}}(s) + \frac{e^{-a_{th}}}{2-e^{-a_{th}}}  F_{\mathrm a_{AF,2}}(s),
\end{align}
the maximum expected-rate of the proposed AF scheme is found by
\begin{align}\label{on-off6}
\mathcal R_{A,c}^m= \max_{I(s)} e^{-a_{th}}\left( 2-e^{-a_{th}}\right) \int_{0}^{\infty} \overline F(s) \frac{-s I^{\prime}(s)}{1+sI(s)} \ud s.
\end{align}
Substituting $a_{th}$ by $\frac{P_r}{1+P_s+P_r}$ and maximizing over $I(s)$ by solving the corresponding E$\ddot{\text{u}}$ler equation \cite{calculus}, we come up with the maximum expected-rate as
\begin{align}\label{on-off7}
\mathcal R_{A,c}^m{=}e^{-\frac{P_r}{1+P_s+P_r}}\!\!\left(\! 2-e^{-\frac{P_r}{1+P_s+P_r}}\! \right)\!\!\! \int\limits_{s_{0}}^{s_{1}} \!\! \overline F(s)\!\! \left(\! \frac{2}{s}\! +\! \frac{ f^{\prime}(s)}{f(s)}\! \right)\!\! \ud s,
\end{align}
where $s_0$ and $s_1$ are the solutions to $\overline F(s_0)=s_0(1+P_s s_0)f(s_0)$ and $\overline F(s_1)=s_1 f(s_1)$, respectively.

\end{proof}


\begin{remark} 
In the above results, the power constraint $P_r$ has been applied only to the time slots when the relays are ON. Alternatively, one can assume that the relays have the ability to save their power while working in the OFF state and consume it in the ON state. In this case, all the above calculations hold except for the integration limit $s_0$ which is now the solution to $\overline F(s_0)=s_0(1+e^{\frac{P_r}{1+P_s+P_r}} P_s s_0)f(s_0)$.
\end{remark}





\section{Hybrid Decode-Amplify-Forward Relays} \label{DAF-multi-layer}

In this section, we propose a DAF relaying strategy which takes advantage of amplifying the layers that could not be
decoded at the relays in the DF scheme. Specifically, each relay tries to decode as many layers as
possible and forward them by spending a portion of its power budget. The remaining power is dedicated to amplifying and forwarding the rest of the layers. 

In order to enhance the lucidity of this section, single-layer coding is studied first. The idea is then extended to multi-layer coding. As the continuous-layer expected-rate of this scheme is a seemingly intractable problem, a finite-layer coding scenario is analyzed.

\subsection{Maximum Throughput}

A single-layer code $X = \gamma X_1$ with power ${P}_{s}$, i.e., $\gamma^2={P}_{s}$,
and rate $R=\ln(1+ {P}_{s} s)$ is transmitted. If $a_{r_\ell} \geq s$, then the $\ell$'th relay decodes the signal and forwards it, otherwise, it amplifies and forwards the received signal to the destination. In time slot $t$, the first (resp.\ second)
relay transmits $X_{r_1}(t)$ (resp.\ $X_{r_2}(t+1)$).
In time slot $t+1$, the first (resp.\ second) relay transmits
$-X_{r_1}^*(t+1)$ (resp.\ $X_{r_2}^*(t)$) with the \emph{backward channel phase compensation}. There are three possibilities:
\begin{enumerate}
\item $a_{r_1} \geq s$ and $a_{r_2} \geq s$: both relays decode the signal. In this case DAF is simplified to DF in \cref{DF-multi-layer}.
\item $a_{r_1} < s$ and $a_{r_2} < s$: none of the relays decodes the signal. This case is simplified to AF in \cref{AF-multi-layer}.
\item $a_{r_1} \geq s , a_{r_2} < s$ or $a_{r_1} < s , a_{r_2} \geq s$: only one relay decodes the signal.
\end{enumerate}
In the third case, without loss of generality, assume that the first
relay decodes the signal and the second relay does not decode it, i.e, $a_{r_1} \geq s , a_{r_2} < s$. Hence, $X_{r_1}(t)=\alpha X_1(t)$ and
$X_{r_2}(t)=c_2 Y_{r_2}(t)=c_2 \left(h_{r_2} \gamma X_1(t)+Z_{r_2}(t)
\right)$, where $\alpha^2= P_r$ and $c_2 = \sqrt{\frac{ P_r}{a_{r_2} P_s+1}}$. 
At the destination, we have
\begin{align}
\begin{cases}
Y(t) = h_1 \alpha X_1(t) +  h_2c_2h_{r_2}\gamma X_1(t+1) \\
\qquad \quad +h_2c_2Z_{r_2}(t+1)+Z(t),\\
Y(t+1) = -h_1 \alpha X_1^*(t+1) + h_2c_2h_{r_2}^*\gamma X_1^*(t) \\
\qquad \quad +h_2c_2Z_{r_2}^*(t)+Z(t+1).
\end{cases}
\end{align}
After compensating the phase of $h_{r_2}$ into $h_{r_2}^*$ in time slot $t+1$, we get
\begin{align} \label{DAF-single-matrix-decoder}
 \begin{bmatrix}
  Y(t) \\
-Y^*(t+1)
 \end{bmatrix}
=
\begin{bmatrix}
 h_{1} \alpha & h_{r_2}h_{2}c_2\gamma \\
-h_{r_2}^*h_{2}^*c_2\gamma & h_{1}^*\alpha
\end{bmatrix}
\begin{bmatrix}
  X(t) \\
X(t+1)
 \end{bmatrix}
 \nonumber \\
+
\begin{bmatrix}
 c_2 h_{2}Z_{r_2}(t+1) +Z(t)    \\
-c_2 h_{2}^* Z_{r_2}(t) -Z^*(t+1)
\end{bmatrix}.
\end{align}
Multiplying $\begin{bmatrix}
 h_{1}\alpha & h_{r_2}h_{2}c_2\gamma \\
-h_{r_2}^*h_{2}^*c_2\gamma & h_{1}^*\alpha
\end{bmatrix}^\dag
$ to both sides of \cref{DAF-single-matrix-decoder}, two channels are parallelized and the source-destination instantaneous mutual information is
\begin{align}
 \mathcal I\left(X;Y\right) = \ln \left( 1+\frac{\left( |h_{1}|^2\alpha^2+|h_{r_2}h_{2}|^2 c_2^2 \gamma^2 \right) }{1+|h_{2}|^2c_2^2} \right).
\end{align}
A comparison of this method and the DF scheme reveals that if $a_{r_2}>\frac{
P_r}{ P_s}a_1$, then DAF outperforms DF, otherwise, we
switch to DF, that is the second relay becomes silent.
Since the relays do not know the value of $a_1$, they use its expected value. As a result, the amplification coefficient of DAF can be written as $c_\ell = \sqrt{\frac{\mathcal U\left( a_{r_\ell}-\frac{ P_r}{ P_s} \right)  P_r}{a_{r_\ell} P_s+1}}$.
It can be shown that the maximum throughput of this scheme is given by the following proposition.

\begin{proposition}
The maximum throughput of the proposed hybrid decode-amplify-forward relaying is given by
\begin{align} \label{pout_DAF}
& \mathcal R_{DA,s}^m{=}\max_{s} \! \left[\! \left(\! 2e^s \!+\! 2e^{-\frac{P_r}{P_s}}\!+\! s\frac{P_s}{P_r}\!-\! 2e^{s-\frac{P_r}{P_s}}\!-\! 1 \! \right)\! e^{-s\left(2+\frac{P_s}{P_r} \right)} \right. \nonumber \\
&\!+\! \left(e^{-a_{th}} \overline F_{\mathrm a_{AF,2}}(s)\!+\! \left( 1\!-\!e^{-a_{th}} \right)\! \overline F_{\mathrm a_{AF,1}}(s) \right)\! e^{-a_{th}} \!\left( 1\!-\!e^{-s} \right)^2 \nonumber \\
& +\left.  2 e^{-\left( s+\frac{P_r}{P_s} \right)} \left( 1-e^{-s} \right) \overline F_{\mathrm a_{DAF}}\left(s \frac{{P}_{s}}{{P}_{r}} \right)  \right] \ln(1+{P}_{s} s),
\end{align}
where $a_{DAF}=\frac{a_1+a_{r_2}P_s\left(a_1+a_2 \right)}{1+a_{r_2}P_s+a_2 P_r}$, $a_{th}=\frac{P_r}{1+P_s+P_r}$, and $a_{AF,2}$ and $a_{AF,1}$ are from \ref{AF-equivalent-2-channel-gain} and \ref{AF-equivalent-1-channel-gain}, respectively.

\end{proposition}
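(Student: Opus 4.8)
The plan is to write the throughput as $\mathcal R_{DA,s}=\mathcal P_{\mathrm{suc}}(s)\ln(1+P_s s)$, where $\mathcal P_{\mathrm{suc}}(s)$ is the probability that the destination decodes the single transmitted layer, and to evaluate $\mathcal P_{\mathrm{suc}}(s)$ by the law of total probability over the joint decoding state of the two relays. The three exhaustive events are exactly those listed before the statement: both relays decode ($a_{r_1}\ge s$, $a_{r_2}\ge s$); neither decodes ($a_{r_1}<s$, $a_{r_2}<s$); exactly one decodes. Since $a_{r_1},a_{r_2}$ are i.i.d.\ $\mathrm{Exp}(1)$ and the last event splits into two symmetric sub-events, the three bracketed terms of \cref{pout_DAF} will correspond precisely to these three cases, each carrying the probability of the relevant backward-channel configuration times the conditional success probability at the destination, and the final step is maximization over $s$.

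For the ``both decode'' case the scheme collapses to the single-layer DF of \cref{DF-outage}, so by \cref{DF-single-theorem} the destination sees the equivalent gain $a_1+a_2\sim\mathrm{Gamma}(2,1)$ and decodes precisely when $a_1+a_2\ge sP_s/P_r$, i.e.\ with probability $(1+sP_s/P_r)e^{-sP_s/P_r}$. I would then fold in the sub-case of ``exactly one decodes'' in which the non-decoding relay has $a_{r_\ell}<P_r/P_s$ and therefore stays silent: here the scheme again reduces to one-relay DF with gain $a_\ell$ and success probability $e^{-sP_s/P_r}$. Carrying out the elementary exponential integrals over $a_{r_1},a_{r_2}$ and adding these contributions should produce the first bracketed term of \cref{pout_DAF}.

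For the ``neither decodes'' case both relays revert to the pure AF mode of \cref{AF-multi-layer} with ON/OFF threshold $a_{th}=\frac{P_r}{1+P_s+P_r}$; conditioning further on how many relays are ON, the destination sees the equivalent gain $a_{AF,2}$ (both ON) or $a_{AF,1}$ (one ON) from \cref{AF-equivalent-2-channel-gain,AF-equivalent-1-channel-gain} and decodes precisely when that gain is $\ge s$, so the ON/OFF weights $e^{-a_{th}}$, $1-e^{-a_{th}}$ together with the decode-failure factor $(1-e^{-s})^2$ assemble the second bracketed term, with $\overline F_{\mathrm a_{AF,2}}$ and $\overline F_{\mathrm a_{AF,1}}$ being the same CDFs already used in \cref{AF_throughput}. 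Finally, for the remaining part of ``exactly one decodes'' — one relay decodes and the other has $a_{r_\ell}\ge P_r/P_s$ and amplifies — I would substitute the DAF amplification coefficient $c_\ell=\sqrt{P_r/(a_{r_\ell}P_s+1)}$ into the Alamouti-parallelized model \cref{DAF-single-matrix-decoder}, read off the equivalent gain $a_{DAF}$, observe that success requires $a_{DAF}\ge sP_s/P_r$, and obtain $\overline F_{\mathrm a_{DAF}}(sP_s/P_r)$ weighted by $2(1-e^{-s})e^{-s}$-type backward-channel probabilities, giving the third bracketed term.

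The routine but lengthy part is the exponential bookkeeping that compresses the ``both decode'' and ``one decodes / other silent'' sub-cases into the single compact first term. The genuine obstacle is the distribution of $a_{DAF}=\frac{a_1+a_{r_2}P_s(a_1+a_2)}{1+a_{r_2}P_s+a_2P_r}$: unlike $a_1+a_2$, it is a ratio of affine combinations of the three independent exponentials $a_1,a_2,a_{r_2}$ (with $a_{r_2}$ moreover restricted by the threshold comparisons $P_r/P_s\le a_{r_2}$ and $a_{r_2}<s$), so $\overline F_{\mathrm a_{DAF}}$ must be produced by an explicit multivariate integration — the same flavour of computation needed earlier for $\overline F_{\mathrm a_{AF,2}}$ and $\overline F_{\mathrm a_{AF,1}}$ in \cref{AF_throughput}. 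Keeping the three thresholds ($s$ for decoding, $P_r/P_s$ for DAF amplification, $a_{th}$ for AF ON/OFF) consistently partitioning the ranges of the backward gains across all the cases is where most of the care is required.
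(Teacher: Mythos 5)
Your decomposition is exactly the one the paper itself sketches before the proposition (the paper offers no proof beyond "it can be shown"): condition on the relays' decoding states, reuse the single-layer DF result when both decode, the ON/OFF AF result when neither decodes, and the Alamouti-parallelized equivalent gain $a_{DAF}$ with success event $a_{DAF}\ge s P_s/P_r$ when exactly one decodes, then apply total probability and maximize over $s$. In approach you are faithful to the paper, and your reading of the success thresholds in each case is consistent with the scheme as described.

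The difficulty is precisely the joint constraint you flag at the end, and it is not merely "routine bookkeeping": if you respect that the amplifying relay must satisfy $P_r/P_s\le a_{r_2}<s$, the weight of the decode-and-amplify case is $2e^{-s}\left(e^{-P_r/P_s}-e^{-s}\right)$ (and zero when $s\le P_r/P_s$), whereas the third term of the proposition carries $2e^{-s}e^{-P_r/P_s}\left(1-e^{-s}\right)$; similarly, the "one decodes, other silent" contribution embedded in the first term has weight $2e^{-s}(1-e^{-s})(1-e^{-P_r/P_s})$ rather than the exact $2e^{-s}\left(1-e^{-\min\{s,\,P_r/P_s\}}\right)$. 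The stated expression is recovered only if the decode events $\{a_{r_\ell}<s\}$ are decoupled from the amplify/ON events $\{a_{r_\ell}\ge P_r/P_s\}$ (resp.\ $\{a_{r_\ell}\ge a_{th}\}$), and if $\overline F_{\mathrm a_{DAF}}$, $\overline F_{\mathrm a_{AF,1}}$, $\overline F_{\mathrm a_{AF,2}}$ are evaluated under the unconditional distributions of $a_{r_1},a_{r_2}$ rather than conditioned on the non-decoding events (note also that, relative to \cref{AF_throughput}, the single-relay-ON term in the second bracket appears without its factor $2$). Under that decoupling the case weights do sum to one and your outline lands on the proposition; under the strict multivariate integration you actually describe, it does not, so your claim that the exact bookkeeping "should produce" the stated first and third terms does not go through as written—you must either adopt the same decoupling as the paper or accept a formula that differs from the stated one.
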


\subsection{Maximum Finite-Layer Expected-Rate}

Since continuous-layer coding for DAF relaying can not be directly solved by variations methods, we choose a finite-layer code and proceed as follows.
In the finite-layer broadcast approach, the source transmits a $K$ layer
code $X = \sum_{i=1}^K \gamma_{i} X_{i}$ to the relays, where
$\gamma_{i}^2$ represents the power allocated to the $i$'th layer with
rate 
\begin{align} \label{rate_layer_2}
R_{i}=\ln \left( 1+\frac{\gamma_{i}^2 s_i}{1+\sum_{j=i+1}^K \gamma_{j}^2 s_i} \right)
\end{align}
Each relay decodes its received signal from the
first layer up to the layer that its backward channel conditions
allow and forwards them to the destination.
Afterwards, each relay amplifies and
forwards the remaining undecoded layers.

Suppose that the first and second relays allocate portions $\xi P_r$ and $\zeta P_r$ of their power to the
decoded layers, respectively. 
Also, assume that the first and second relays respectively decode $M$ and $N$ layers out of the $K$ transmitted layers. Without loss of
generality, assume $M \geq N$. Denote by $\alpha_i^2$ (resp.\ $\beta_i^2$) the power allocated to the $i$'th layer at the first (resp.\ second) relay. 
The amplifying coefficients are
$c_1=\sqrt{\frac{\bar \xi P_r}{a_{r_1}\sum_{i=M+1}^K
\gamma_i^2 +1}}$ for the first relay and $c_2=\sqrt{\frac{\bar \zeta
P_r}{a_{r_2}\sum_{i=N+1}^K \gamma_i^2 +1}}$ for the second
relay. Let us define $\alpha_i \Def h_{r_1} c_1 \gamma_i$
for $i=M+1,...,K$ and 
$\beta_i \Def h_{r_2} c_2 \gamma_i$
for $i=N+1,...,K$.
The coding scheme is as follows.
At time $t$, the first relay sends $\sum_{i=1}^K \alpha_{i} X_{i} (t)$ while the other relay sends $\sum_{i=1}^K \beta_{i} X_{i} (t+1)$. At time $t+1$, the first and the second relays send $\sum_{i=1}^K - \alpha_{i} X_{i}^{*} (t+1)$ and $\sum_{i=1}^K \beta_{i} X_{i}^{*} (t)$ with compensating the phases of $h_{r_1}$ and $h_{r_2}$ into $h_{r_1}^*$ and $h_{r_2}^*$, respectively. 

The received signal at the destination is
\begin{align} \label{alam1}
\begin{cases}
Y(t){=}h_{1}\! \sum_{i=1}^K\alpha_{i}X_{i}(t)\!+\! h_{2}\! \sum_{i=1}^K \beta_{i}X_{i} (t+1)\! \\
\qquad \quad +h_1 c_1 Z_{r_1}(t) + h_2 c_2 Z_{r_2}(t+1) +\! Z(t), \\
Y(t+1){=}\! -\! h_{1}\! \sum_{i=1}^K\alpha_{i}^*X_{i}^{*}(t+1)\!+\! h_{2}\! \sum_{i=1}^K \beta_{i}^*X_{i}^{*} (t)\! \\
\qquad \quad -h_1 c_1 Z_{r_1}^*(t+1)+h_2 c_2 Z_{r_2}^*(t)+\! Z(t+1).
\end{cases} 
\end{align}
One may express a matrix representation for \cref{alam1} as 
\begin{align} \label{alam4}
 \begin{bmatrix}
  Y(t) \\
-Y^*(t+1)
\end{bmatrix}
{=}
\sum_{i=1}^K
\!
\begin{bmatrix}
 h_{1}\alpha_{i} & \!\!\! h_{2}\beta_{i} \\
-h_{2}^{*}\beta_{i} & \!\!\! h_{1}^{*}\alpha_{i}
\end{bmatrix}
\!\!
 \begin{bmatrix}
  X_{i}(t) \\
X_{i}(t+1)
\end{bmatrix}
\!\!
+ \nonumber \\
\!\!
 \begin{bmatrix}
  h_1 c_1 Z_{r_1}(t) + h_2 c_2 Z_{r_2}(t+1) +\! Z(t) \\
h_1^* c_1 Z_{r_1}(t+1)-h_2^* c_2 Z_{r_2}(t)-\! Z^*(t+1)
\end{bmatrix} .
\end{align}

The destination starts decoding the code layers in order, from the first layer up to the highest layer that is decodable. To decode the $i$'th layer, after decoding the first $i-1$ layers, the channels are separated into two parallel channels by multiplying both sides of \cref{alam4} by $\begin{bmatrix}
 h_{1}\alpha_{i} & \!\!\! h_{2}\beta_{i} \\
-h_{2}^{*}\beta_{i} & \!\!\! h_{1}^{*}\alpha_{i}
\end{bmatrix}^\dag$. Therefore,
\begin{align}
 &\begin{bmatrix}
  \tilde{Y}(t) \\
\tilde{Y}(t+1)
 \end{bmatrix}
= 
\begin{bmatrix}
 a_1\alpha_{i}^{2}+a_2\beta_{i}^{2} & 0 \\
0 & a_1\alpha_{i}^{2}+a_2\beta_{i}^{2}
\end{bmatrix}
\begin{bmatrix}
  X_{i}(t) \\
X_{i}(t+1)
 \end{bmatrix} +
\nonumber \\
&
\!\!\! \sum_{j=i+1}^K \!\! 
\begin{bmatrix}
 h_{1}^{*}\alpha_{i} & -h_{2}\beta_{i} \\
h_{2}^{*}\beta_{i} & h_{1}\alpha_{i}
\end{bmatrix} \!\!
\begin{bmatrix}
 h_{1}\alpha_{j} & h_{2}\beta_{j} \\
-h_{2}^{*}\beta_{j} & h_{1}^{*}\alpha_{j}
\end{bmatrix} \!\!
\begin{bmatrix}
  X_{j}(t) \\
X_{j}(t+1)
 \end{bmatrix} \!\!
 \! +\!
\begin{bmatrix}
  \tilde Z(t) \\
\tilde Z(t+1)
 \end{bmatrix}.
\end{align}
$\tilde Z(t)$ and $\tilde Z(t+1)$ are two independent i.i.d.\ AWGN, each with power $\left(a_1  \alpha_i^2+a_2 \beta_i^2\right)\left(1+a_1 c_1^2+a_2 c_2^2  \right)$.


The interference power caused by upper layers while decoding the $i$'th layer is
\begin{align} \label{cdf_layer_DF_ml_1_2}
I_i&=
\sum_{j=i+1}^K \left( \left( a_1 \alpha_i \alpha_j + a_2 \beta_i \beta_j \right)^2 + a_1 a_2 \left( \alpha_i \beta_j - \alpha_j \beta_i  \right)^2 \right) \nonumber \\
&=\left( a_1 \alpha_i^2 +a_2 \beta_i^2 \right) \sum_{j=i+1}^K \left( a_1 \alpha_j^2 +a_2 \beta_j^2 \right).
\end{align}
Thus, the probability that the $i$'th layer can be correctly decoded at the destination is
{\small \begin{align}
 \mathcal P_i{=}
\Pr \! \left\{\!\! \frac{a_1\alpha_{i}^2+a_2\beta_{i}^2}{1\!+\! a_1 c_1^2\! +\! a_2 c_2^2\! +\!\! \sum_{j=i+1}^K \!\! \left( a_1\alpha_{j}^2+a_2\beta_{j}^2 \right)}\!\! \geq \!\! \frac{\gamma_{i}^2 s_i}{1\!\! +\!\! \sum_{j=i+1}^K \gamma_{j}^2 s_i}\!\! \right\},
\end{align} }
Hence, the expected-rate at the destination using this scheme can be written as 
\begin{align}
 \mathcal R_{DA,f}= 
\sum_{i=0}^K \mathcal P_i \ln \left( 1+\frac{\gamma_{i}^2 s_i}{1+\sum_{j=i+1}^K \gamma_{j}^2 s_i} \right).
\end{align}
To summarize, we have shown the following.
\begin{theorem}
The maximum achievable expected-rate in the proposed DAF relaying is given by
\begin{align}
\mathcal R_{DA,f}^m=\max_{\begin{subarray}{c}
\xi, \zeta, s_i,\gamma_i,
\alpha_i,
\beta_i
\end{subarray}} \sum_{l=0}^K \mathcal P_i \ln \left( 1+\frac{\gamma_{i}^2 s_i}{1+\sum_{j=i+1}^K \gamma_{j}^2 s_i} \right),
\end{align}
where
{\small \begin{align}
 \mathcal P_i{=}
\Pr \! \left\{\!\! \frac{a_1\alpha_{i}^2+a_2\beta_{i}^2}{1\!+\! a_1 c_1^2\! +\! a_2 c_2^2\! +\!\! \sum_{j=i+1}^K \!\! \left( a_1\alpha_{j}^2+a_2\beta_{j}^2 \right)}\!\! \geq \!\! \frac{\gamma_{i}^2 s_i}{1\!\! +\!\! \sum_{j=i+1}^K \gamma_{j}^2 s_i}\!\! \right\},
\end{align} }
and $\alpha_i=\sqrt{\frac{\overline \xi
P_r}{a_{r_1}\sum_{i=M+1}^K \gamma_i^2 +1}}\gamma_i$, $i=M+1,...,K$, 
and $\beta_i=\sqrt{\frac{\overline \zeta
P_r}{a_{r_2}\sum_{i=N+1}^K \gamma_i^2 +1}}\gamma_i$, $i=N+1,...,K$. 
The power constraints are $\sum_{i=1}^K \gamma_{i}^2= P_{s}$, $\sum_{i=1}^M \alpha_{i}^2=\xi  P_r$, and $\sum_{i=1}^N \beta_{i}^2=\zeta P_{r}$. Similar to DF scenario, $\left(\alpha_1, \alpha_2,\dots,\alpha_M,\xi  \right)$ and $\left(\beta_1, \beta_2,\dots,\beta_N,\zeta  \right)$ are optimized separately.
\end{theorem}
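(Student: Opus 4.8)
This statement collects the encoding and decoding analysis carried out just above, so the proof is largely a matter of assembling those pieces and then arguing that the relay optimizations decouple. The plan is as follows.

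First I would note that the three regimes of relay behavior --- both relays decode a layer, neither decodes it, or exactly one does --- are all subsumed by the single matrix model in \cref{alam4}. For a layer that relay~$1$ decodes, $\alpha_i$ is a free design coefficient constrained only by $\sum_{i=1}^{M}\alpha_i^2 = \xi P_r$; for a layer it amplifies, $\alpha_i = h_{r_1} c_1 \gamma_i$ with the normalization $c_1 = \sqrt{\bar\xi P_r /(a_{r_1}\sum_{i=M+1}^{K}\gamma_i^2 + 1)}$ enforcing the residual budget $\bar\xi P_r$, and symmetrically for relay~$2$ with $\beta_i$, $\zeta$, $c_2$, $N$. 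Since the amplified layers carry the relay noises $Z_{r_1}$ and $Z_{r_2}$, the additive term in \cref{alam4} is colored; matched-filtering by the conjugate transpose of the $2\times 2$ Alamouti matrix parallelizes the two time slots, and the effective noise power in the $i$'th layer becomes $(a_1\alpha_i^2 + a_2\beta_i^2)(1 + a_1 c_1^2 + a_2 c_2^2)$.

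Next I would verify the successive-cancellation step: decoding layers $1,\dots,i-1$ first and orthogonalizing leaves residual interference from the upper layers that collapses, by the same algebraic identity as in \cref{cdf_layer_DF_ml_1_2}, to $I_i = (a_1\alpha_i^2 + a_2\beta_i^2)\sum_{j>i}(a_1\alpha_j^2 + a_2\beta_j^2)$. Forming the signal-to-interference-plus-noise ratio and comparing it with the target $\gamma_i^2 s_i /(1 + \sum_{j>i}\gamma_j^2 s_i)$ gives the decoding probability $\mathcal P_i$ in the stated form; summing $\mathcal P_i R_i$ over the layers yields $\mathcal R_{DA,f}$, and maximizing over $\xi,\zeta,\{s_i\},\{\gamma_i\},\{\alpha_i\},\{\beta_i\}$ subject to $\sum_i\gamma_i^2 = P_s$, $\sum_{i=1}^{M}\alpha_i^2 = \xi P_r$ and $\sum_{i=1}^{N}\beta_i^2 = \zeta P_r$ produces the claimed expression.

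The only genuinely non-routine point is the assertion that $(\alpha_1,\dots,\alpha_M,\xi)$ and $(\beta_1,\dots,\beta_N,\zeta)$ may be optimized separately. I would obtain this exactly as in the DF case: since neither relay observes the other's backward channel, each designs its codebook assuming the other uses an identical layer-power profile, and then the DAF analogue of \cref{df-multi-strategy-theorem} shows that uncorrelated relay signals maximize the expected-rate, which in turn decouples the two power allocations. The obstacle is checking that the reduction used for \cref{df-multi-strategy-theorem} survives the colored-noise factor $1 + a_1 c_1^2 + a_2 c_2^2$ and the randomness of the amplified coefficients $\alpha_i,\beta_i$; because that factor multiplies signal, interference and noise alike for a fixed realization of the backward gains, and the underlying MISO outage-minimization argument (as in the proof of \cref{DF-single-theorem}) is invoked conditionally on those gains, the unitary-invariance step carries over verbatim. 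The remaining substitutions of channel-gain CDFs and the final maximization are mechanical.
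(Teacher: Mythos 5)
Your proposal is correct and follows essentially the same route as the paper, which likewise proves this theorem simply by assembling the preceding encoding/decoding analysis (the matrix model, Alamouti-type parallelization with effective noise power $\left(a_1\alpha_i^2+a_2\beta_i^2\right)\left(1+a_1c_1^2+a_2c_2^2\right)$, the interference identity, the resulting $\mathcal P_i$, and the final maximization). The only divergence is that you treat the separate optimization of $\left(\alpha_1,\dots,\alpha_M,\xi\right)$ and $\left(\beta_1,\dots,\beta_N,\zeta\right)$ as something to be established via a DAF analogue of \cref{df-multi-strategy-theorem}, whereas in the paper it is an operational feature of the proposed scheme (each relay knows only its own backward channel and assumes a symmetric power profile), so that extra argument, while consistent, is not needed for this achievability statement.
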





\section{Compress-Forward Relays} \label{CF-multi-layer}

In CF relaying, the relays quantize their received signals using an optimal Gaussian quantizer with minimum mean-square error (MSE) criterion \cite{thomas2006}, and then forward the quantized signals. With respect to the correlation between the relays signals, Wyner-Ziv compression method \cite{wyner} is applied.
In this scheme, the relays do not decode the signal and hence, the latency and complexity is lower in comparison with DF and DAF.
Also, the relays do not need to access the source codebook; however, the source-relay channel gains must be available at the destination. 


Denote by $q_{r_1}$ and $q_{r_2}$ the quantized signals at the first and second relays, respectively. One can write the following equations on $q_{r_\ell},~\ell=1,2$,
\begin{align}
 Y_{r_\ell}=q_{r_\ell}+n_{r_\ell},
\end{align}
and 
\begin{align}
 q_{r_\ell}=\theta_{\ell} Y_{r_\ell}+\tilde n_{r_\ell},
\end{align}
where $n_{r_\ell} \sim \mathcal{CN}(0,D_\ell)$ and $\tilde n_{r_\ell} \sim \mathcal{CN}(0,\theta_{\ell} D_\ell)$ are the equivalent quantization noises independent of $q_{r_\ell}$, $\theta_{\ell} \Def 1-\frac{D_\ell}{1+a_{r_\ell}P_{s}}$, and $D_\ell$ is the quantizer distortion at the $\ell$'th relay \cite{berger}.

If the destination decodes $q_{r_1}$ and $q_{r_2}$, and the transmission rate is below $\mathcal I(X;q_{r_1},q_{r_2})$, the signal is successfully decodable. For simplicity, let us assume that the optimum value of the quantizer distortion $D_{\ell}^o$ and the optimum value of the relays rate $R_{r_\ell}^o$ are selected independent of the source-relays channel gains. Hence, with respect to the network symmetry, $D_{1}^o=D_{2}^o$ and $R_{r_1}^o=R_{r_2}^o$, and therefore, they are simply denoted by $D$ and $R_{r}$, respectively. 

To decoded the quantized signals at the destination, based on the multiple-access capacity region \cite{thomas2006} in the second-hop, the following inequalities must be satisfied,
\begin{align} \label{mac_part}
&R_{r}<\mathcal I(X_{r_1};Y|X_{r_2})=\ln \left( a_{1}P_r+1 \right), \nonumber \\
&R_{r}<\mathcal I(X_{r_2};Y|X_{r_1})=\ln \left( a_{2}P_r+1 \right), \nonumber \\
&2R_{r}<\mathcal I(X_{r_1},X_{r_2};Y)=\ln \left( (a_{1}+a_{2})P_r+1 \right).
\end{align}

For lossless compression of the quantized signals, based on the Wyner-ziv rate region \cite{wyner}, we have the following inequalities,
\begin{align} 
&R_{r} \geq \mathcal I(q_{r_1};Y_{r_1}|q_{r_2}), \label{1wz_part} \\
&R_{r} \geq \mathcal I(q_{r_2};Y_{r_2}|q_{r_1}), \label{2wz_part} \\
&2R_{r} \geq \mathcal I(q_{r_1},q_{r_2};Y_{r_1},Y_{r_2}). \label{3wz_part}
\end{align}


In the problem in consideration, \cref{3wz_part} is 
\begin{align} \label{sw_3_cf}
\mathcal I(q_{r_1},q_{r_2};Y_{r_1},Y_{r_2}) &= \ln \left( \frac{\det \mathbf R_{Y_1Y_2}}{\det \mathbf R_{Y_1Y_2|q_{r_1},q_{r_2}}}  \right) \nonumber \\
&= \ln \left( \frac{(a_{r_1}+a_{r_2})P_s+1}{D^2} \right).
\end{align}

In order to derive a closed form expression for \cref{1wz_part,2wz_part}, let us first estate the following lemmas. 

\begin{lemma}

The mutual information between the source signal and the relays quantized signals is given by
\begin{align} \label{I-q-q-X-eq-}
\mathcal I(q_{r_1},q_{r_2};X)=\ln \left( 1+a_{CF} P_s \right),
\end{align}
where,
\begin{align} \label{def-a-cf-eq} 
a_{CF} \Def \frac{a_{r_1}}{1+\frac{\theta_2+D}{\theta_2+1}\frac{D}{\theta_1}}+\frac{a_{r_2}}{1+\frac{\theta_1+D}{\theta_1+1}\frac{D}{\theta_2}}.
\end{align}
\end{lemma}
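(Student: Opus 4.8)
The plan is to compute $\mathcal I(q_{r_1},q_{r_2};X)=\mathcal H(q_{r_1},q_{r_2})-\mathcal H(q_{r_1},q_{r_2}\mid X)$ directly from the Gaussian model, exploiting that all signals involved are jointly complex Gaussian. First I would write out the covariance matrix $\mathbf R_{q_{r_1}q_{r_2}}$: since $q_{r_\ell}=\theta_\ell Y_{r_\ell}+\tilde n_{r_\ell}$ with $Y_{r_\ell}=h_{r_\ell}X+Z_{r_\ell}$, one gets $\mathbb E(|q_{r_\ell}|^2)=\theta_\ell^2(1+a_{r_\ell}P_s)+\theta_\ell D_\ell=\theta_\ell(1+a_{r_\ell}P_s)$ (using $\theta_\ell(1+a_{r_\ell}P_s)=1+a_{r_\ell}P_s-D_\ell$ from the definition of $\theta_\ell$), and the cross term $\mathbb E(q_{r_1}q_{r_2}^*)=\theta_1\theta_2 h_{r_1}h_{r_2}^*P_s$ because the noises $Z_{r_1},Z_{r_2},\tilde n_{r_1},\tilde n_{r_2}$ are mutually independent and independent of $X$. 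Then I would form the conditional covariance $\mathbf R_{q_{r_1}q_{r_2}\mid X}$, which is diagonal with entries $\theta_\ell^2 D_\ell+\theta_\ell D_\ell$... wait, more carefully $\mathbb E(|q_{r_\ell}|^2\mid X)=\theta_\ell^2\cdot 1+\theta_\ell D_\ell=\theta_\ell(\theta_\ell+D_\ell)$ and the conditional cross-covariance vanishes since conditioned on $X$ the two relay observations are independent.

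Next, using $\mathcal I(q_{r_1},q_{r_2};X)=\ln\bigl(\det\mathbf R_{q_{r_1}q_{r_2}}/\det\mathbf R_{q_{r_1}q_{r_2}\mid X}\bigr)$, I would substitute $D_1=D_2=D$ (and hence $\theta_1,\theta_2$ as given, keeping them distinct as symbols because they depend on $a_{r_1},a_{r_2}$). The numerator determinant is $\theta_1\theta_2(1+a_{r_1}P_s)(1+a_{r_2}P_s)-\theta_1^2\theta_2^2|h_{r_1}|^2|h_{r_2}|^2P_s^2$; pulling out $\theta_1\theta_2$ and writing $1+a_{r_\ell}P_s=\theta_\ell(1+a_{r_\ell}P_s)+D$ lets me reorganize. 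The denominator determinant is $\theta_1\theta_2(\theta_1+D)(\theta_2+D)$. The remaining work is an algebraic simplification showing the ratio equals $1+a_{CF}P_s$ with $a_{CF}$ as in \cref{def-a-cf-eq}; I expect the cross-term $-\theta_1\theta_2|h_{r_1}|^2|h_{r_2}|^2P_s^2$ to be exactly what produces the coupled denominators $1+\frac{\theta_2+D}{\theta_2+1}\frac{D}{\theta_1}$ and $1+\frac{\theta_1+D}{\theta_1+1}\frac{D}{\theta_2}$ rather than a simple sum $a_{r_1}/(\cdots)+a_{r_2}/(\cdots)$ with independent terms.

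The main obstacle will be the final algebraic reduction: matching my raw determinant ratio to the specific asymmetric-looking but symmetric form $a_{CF}=\frac{a_{r_1}}{1+\frac{\theta_2+D}{\theta_2+1}\frac{D}{\theta_1}}+\frac{a_{r_2}}{1+\frac{\theta_1+D}{\theta_1+1}\frac{D}{\theta_2}}$. I would handle this by first verifying the identity $\theta_\ell+1=\theta_\ell(1+a_{r_\ell}P_s)/(1+a_{r_\ell}P_s)\cdot\text{(something)}$... actually more usefully, noting $1+a_{r_\ell}P_s=(1-\theta_\ell)^{-1}D$ equivalently $a_{r_\ell}P_s=\frac{D}{1-\theta_\ell}-1=\frac{\theta_\ell-1+D}{1-\theta_\ell}$, which lets me express everything in terms of $\theta_1,\theta_2,D$ only and then clear denominators. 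A cleaner route may be to compute $1+a_{CF}P_s$ by combining the two fractions over a common denominator and checking the result against $\det\mathbf R_{q_{r_1}q_{r_2}}/\det\mathbf R_{q_{r_1}q_{r_2}\mid X}$ term by term; this is routine but bookkeeping-heavy. An alternative, possibly shorter, derivation uses the chain rule $\mathcal I(q_{r_1},q_{r_2};X)=\mathcal I(q_{r_1};X)+\mathcal I(q_{r_2};X\mid q_{r_1})$ and the MMSE interpretation of each term, but I expect the determinant approach to be the most transparent to write down.
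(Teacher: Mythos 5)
Your setup is the same as the paper's: both compute $\mathcal I(q_{r_1},q_{r_2};X)=\ln\bigl(\det\mathbf R_{q_{r_1}q_{r_2}}/\det\mathbf R_{q_{r_1}q_{r_2}|X}\bigr)$ from the jointly Gaussian test-channel model, and your covariance bookkeeping is correct: $\mathbb E\left(|q_{r_\ell}|^2\right)=\theta_\ell(1+a_{r_\ell}P_s)$, $\mathbb E\left(q_{r_1}q_{r_2}^*\right)=\theta_1\theta_2 h_{r_1}h_{r_2}^*P_s$, $\mathbb E\left(|q_{r_\ell}|^2\mid X\right)=\theta_\ell(\theta_\ell+D)$ with vanishing conditional cross-covariance; expanding these via $\theta_\ell(1+a_{r_\ell}P_s)=1+a_{r_\ell}P_s-D$ reproduces exactly the two determinants displayed in the paper's proof.

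The gap sits precisely in the step you defer as ``routine but bookkeeping-heavy,'' and in your prediction of what it yields. Carrying the algebra out, $\det\mathbf R_{q_{r_1}q_{r_2}}-\det\mathbf R_{q_{r_1}q_{r_2}|X}=\theta_1\theta_2 P_s\bigl[a_{r_1}\theta_1(\theta_2+D)+a_{r_2}\theta_2(\theta_1+D)\bigr]$ while $\det\mathbf R_{q_{r_1}q_{r_2}|X}=\theta_1\theta_2(\theta_1+D)(\theta_2+D)$, so the ratio is $1+P_s\bigl(\tfrac{a_{r_1}}{1+D/\theta_1}+\tfrac{a_{r_2}}{1+D/\theta_2}\bigr)$ --- a \emph{decoupled}, maximum-ratio-combining type sum, as it must be since $q_{r_1}$ and $q_{r_2}$ are conditionally independent noisy observations of $X$. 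The cross term does not produce the coupled denominators of \cref{def-a-cf-eq}; the two expressions agree only when $D=1$ (for instance $P_s=1$, $a_{r_1}=1$, $a_{r_2}=2$, $D=1/2$ gives determinant ratio $2.85$, whereas $1+a_{CF}P_s$ with $a_{CF}$ from \cref{def-a-cf-eq} is about $3.07$). So your plan, executed correctly, terminates at $a_{CF}=\frac{a_{r_1}}{1+D/\theta_1}+\frac{a_{r_2}}{1+D/\theta_2}$ and cannot be massaged term by term into \cref{def-a-cf-eq}; the coupling there traces to a slip in the paper's own chain \cref{a-cf-eq-proof}, where the denominator $\theta_1\theta_2+\theta_1 D$ turns into $\theta_1\theta_2+\theta_1$ between the second and third lines (note $\frac{\theta_2 D+D^2}{\theta_1\theta_2+\theta_1 D}=\frac{D}{\theta_1}$). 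In your write-up you should either derive and state the decoupled expression or explicitly flag this mismatch; attempting to ``match'' the raw determinant ratio to \cref{def-a-cf-eq} as written is a step that would fail.
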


\begin{proof}
The mutual information between the source signal and the relays quantized signals can be expressed by
\begin{align}
 \mathcal I(q_{r_1},q_{r_2};X)=\ln \left( \frac{\det \mathbf R_{q_{r_1}q_{r_2}}}{\det \mathbf R_{q_{r_1}q_{r_2}|X}}  \right),
\end{align}
where
\begin{align}
 \det \mathbf R_{q_{r_1}q_{r_2}}=\theta_{1}^2 \theta_{2}^2 a_{r_1} P_s+ \theta_{1}^2 \theta_{2} D a_{r_1} P_s + \theta_{1}^2 \theta_{2}^2 a_{r_2} P_s+\theta_{1}^2 \theta_{2}^2 \nonumber \\
+ \theta_{1}^2 \theta_{2} D+ \theta_{1} \theta_{2}^2 a_{r_2} P_s D+ \theta_{1} \theta_{2}^2 D + \theta_{1} \theta_{2} D^2,
\end{align}
and
\begin{align}
\det \mathbf R_{q_{r_1}q_{r_2}|X}=\theta_{1}^2 \theta_{2}^2+ \theta_{1}^2 \theta_{2} D+ \theta_{1} \theta_{2}^2 D+ \theta_{1} \theta_{2} D^2.
\end{align}
Thus,
\begin{align} \label{a-cf-eq-proof}
\mathcal I(X\!;\!q_{r_1},\!q_{r_2}\!)&{=}\!\ln\! \left(\!\! 1\!\!+\!\frac{\!\theta_{1} \theta_{2} a_{r_1} \!\!+\!\theta_{1} D a_{r_1} \!\!+\!\theta_{1} \theta_{2} a_{r_2} \!\!+\!\theta_{2} a_{r_2}  D}{\theta_{1} \theta_{2}+ \theta_{1} D+  \theta_{2} D+  D^2}\!P_s\!  \right)
\nonumber \\
&{=} \ln \Bigg( 1+\left( a_{r_1}\frac{\theta_1\theta_2+\theta_1 D }{\theta_1\theta_2+\theta_1D+\theta_2D+D^2} \right. \nonumber \\
&+ \left. a_{r_2}\frac{\theta_1\theta_2+\theta_2D}{\theta_1\theta_2+\theta_1D+\theta_2D+D^2} \right) P_s \Bigg) \nonumber \\
&{=} \ln\left( 1+ \left( \frac{a_{r_1}}{1+\frac{\theta_2D+D^2}{\theta_1\theta_2+\theta_1}}+\frac{a_{r_2}}{1+\frac{\theta_1D+D^2}{\theta_1\theta_2+\theta_2}}   \right) P_s \right) \nonumber \\
&{=} \ln\left( 1\!+\! \left( \frac{a_{r_1}}{1\!+\!\frac{\theta_2+D}{\theta_2+1}\frac{D}{\theta_1}}+\frac{a_{r_2}}{1+\frac{\theta_1+D}{\theta_1+1}\frac{D}{\theta_2}}  \right) P_s \right).
\end{align}

\Cref{def-a-cf-eq} together with \cref{a-cf-eq-proof} results.
\end{proof}

\begin{lemma} \label{lemma-cf-I-q1-yr1-condition-q2}
In the problem of interest, we have
\begin{align} \label{sw_2_cf}
\mathcal I(q_{r_1};Y_{r_1}|q_{r_2}){=}\ln \left( 1\!+\!a_{CF} P_s \right)\!+\!\ln \left( \frac{\left(\theta_1\!+\!D\right)\left(\theta_2\!+\!D\right)}{D\left( 1+a_{r_2}P_s \right)}  \right).
\end{align}
\end{lemma}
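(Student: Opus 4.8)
The plan is to write the conditional mutual information as a difference of differential entropies and then exploit that all the random variables in play are jointly circularly-symmetric complex Gaussian, so that each entropy is fixed by a single variance (or by a $2\times 2$ covariance determinant) --- quantities that were essentially already assembled in the proof of the previous lemma.

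First I would write
\begin{align}
\mathcal I(q_{r_1};Y_{r_1}|q_{r_2}) = \mathcal H(q_{r_1}|q_{r_2}) - \mathcal H(q_{r_1}|Y_{r_1},q_{r_2}).
\end{align}
Since $q_{r_1}=\theta_1 Y_{r_1}+\tilde n_{r_1}$ and the fresh quantization noise $\tilde n_{r_1}\sim\mathcal{CN}(0,\theta_1 D)$ is independent of $X$, $Z_{r_2}$ and $\tilde n_{r_2}$, hence of the pair $(Y_{r_1},q_{r_2})$, the second term collapses to $\mathcal H(\tilde n_{r_1})=\ln(\pi e\,\theta_1 D)$. For the first term, joint Gaussianity of $(q_{r_1},q_{r_2})$ gives $\mathcal H(q_{r_1}|q_{r_2})=\ln\!\big(\pi e\,\det\mathbf R_{q_{r_1}q_{r_2}}/\mathrm{Var}(q_{r_2})\big)$, so that
\begin{align}
\mathcal I(q_{r_1};Y_{r_1}|q_{r_2}) = \ln\frac{\det\mathbf R_{q_{r_1}q_{r_2}}}{\theta_1 D\,\mathrm{Var}(q_{r_2})}.
\end{align}

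Next I would evaluate the two remaining ingredients. Using $\theta_2=1-\frac{D}{1+a_{r_2}P_s}$, a one-line computation gives $\mathrm{Var}(q_{r_2})=\theta_2^2(1+a_{r_2}P_s)+\theta_2 D=\theta_2\,(1+a_{r_2}P_s)$. For the numerator I reuse the previous lemma: there it was shown that $\mathcal I(q_{r_1},q_{r_2};X)=\ln\!\big(\det\mathbf R_{q_{r_1}q_{r_2}}/\det\mathbf R_{q_{r_1}q_{r_2}|X}\big)=\ln(1+a_{CF}P_s)$, hence $\det\mathbf R_{q_{r_1}q_{r_2}}=(1+a_{CF}P_s)\det\mathbf R_{q_{r_1}q_{r_2}|X}$, while the expression found there for $\det\mathbf R_{q_{r_1}q_{r_2}|X}$ factors neatly as $\theta_1\theta_2(\theta_1+D)(\theta_2+D)$. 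Substituting both into the display above, the factor $\theta_1\theta_2$ in the numerator cancels against $\theta_1 D$ and the $\theta_2$ hidden in $\mathrm{Var}(q_{r_2})$, leaving precisely $\ln(1+a_{CF}P_s)+\ln\frac{(\theta_1+D)(\theta_2+D)}{D\,(1+a_{r_2}P_s)}$, which is the claim.

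The only genuinely delicate points are (i) justifying that $\tilde n_{r_1}$ is independent of $(Y_{r_1},q_{r_2})$ so that $\mathcal H(q_{r_1}|Y_{r_1},q_{r_2})=\mathcal H(\tilde n_{r_1})$ --- immediate from the model, in which $\tilde n_{r_1}$ is the quantization noise injected at relay $1$ and is uncorrelated with all source-side and relay-$2$-side randomness --- and (ii) carrying out the factorization $\det\mathbf R_{q_{r_1}q_{r_2}|X}=\theta_1\theta_2(\theta_1+D)(\theta_2+D)$ and tracking the cancellation of the $\theta$'s; everything else is routine scalar Gaussian-variance bookkeeping and I would not expect a real obstacle. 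As a cross-check one can instead write $\mathcal I(q_{r_1};Y_{r_1}|q_{r_2})=\mathcal I(q_{r_1};X|q_{r_2})+\mathcal I(q_{r_1};Y_{r_1}|X,q_{r_2})$, with the cross term $\mathcal I(q_{r_1};X|Y_{r_1},q_{r_2})$ vanishing by the Markov chain $X-Y_{r_1}-q_{r_1}$, then apply the previous lemma to $\mathcal I(q_{r_1},q_{r_2};X)$, evaluate $\mathcal I(q_{r_2};X)$ from the scalar channel $q_{r_2}=\theta_2 h_{r_2}X+\theta_2 Z_{r_2}+\tilde n_{r_2}$, and use $\mathcal I(q_{r_1};Y_{r_1}|X,q_{r_2})=\mathcal I(\theta_1 Z_{r_1}+\tilde n_{r_1};Z_{r_1})=\ln\frac{\theta_1+D}{D}$; both routes give the same result, and I would keep the determinant route as the main line since it recycles the previous lemma most directly.
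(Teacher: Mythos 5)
Your proposal is correct, and its main line is a genuinely different (if equivalent) route from the paper's. The paper never writes Gaussian differential entropies of the pair $(q_{r_1},q_{r_2})$ directly; instead it runs a chain of mutual-information expansions, using the Markov chains $X\longmapsto Y_{r_1}\longmapsto q_{r_1}$ and $q_{r_2}\longmapsto X\longmapsto Y_{r_1}$ to kill the cross terms, until the quantity collapses to $\mathcal I(q_{r_1},q_{r_2};X)+\mathcal I(q_{r_1};Y_{r_1}|X)-\mathcal I(q_{r_2};X)=\ln(1+a_{CF}P_s)+\ln\bigl(1+\tfrac{\theta_1}{D}\bigr)-\ln\bigl(1+\tfrac{\theta_2 a_{r_2}P_s}{\theta_2+D}\bigr)$, which simplifies to the claim exactly as in your cross-check; so your secondary route is the paper's proof. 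Your primary route instead evaluates $\mathcal H(q_{r_1}|q_{r_2})-\mathcal H(q_{r_1}|Y_{r_1},q_{r_2})=\ln\bigl(\det\mathbf R_{q_{r_1}q_{r_2}}/(\theta_1 D\,\mathrm{Var}(q_{r_2}))\bigr)$ and recycles $\det\mathbf R_{q_{r_1}q_{r_2}}=(1+a_{CF}P_s)\,\theta_1\theta_2(\theta_1+D)(\theta_2+D)$ and $\mathrm{Var}(q_{r_2})=\theta_2(1+a_{r_2}P_s)$; the algebra and the cancellations you indicate check out. What each buys: your determinant route is shorter and reuses the covariance bookkeeping of the preceding lemma almost verbatim, at the price of invoking joint Gaussianity and the independence of $\tilde n_{r_1}$ from $(Y_{r_1},q_{r_2})$ explicitly; the paper's route makes the Markov/independence structure the visible skeleton of the argument and only needs scalar mutual informations at the end. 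Both rest on the same modeling assumptions (Gaussian test channels with quantization noises independent across relays and of the source-side randomness), which are also implicit in the paper's computation of $\mathbf R_{q_{r_1}q_{r_2}|X}$, so point (i) that you flag is indeed justified by the model.
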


\begin{proof}

\begin{align}  \label{sw_2_cf-proof}
\mathcal I(q_{r_1};Y_{r_1}|q_{r_2})&=\mathcal I(q_{r_1};X,Y_{r_1}|q_{r_2})-\mathcal I(q_{r_1};X|Y_{r_1},q_{r_2}) \nonumber \\
&\stackrel{(a)}{=} \mathcal I(q_{r_1};X,Y_{r_1}|q_{r_2}) \nonumber \\
&=\mathcal I(q_{r_1},q_{r_2};X,Y_{r_1})-\mathcal I(q_{r_2};X,Y_{r_1})  \nonumber \\
&=\mathcal I(q_{r_1},q_{r_2};X,Y_{r_1})-\mathcal I(q_{r_2};Y_{r_1}|X) \nonumber \\
&-\mathcal I(q_{r_2};X) \nonumber \\
&\stackrel{(b)}{=} \mathcal I(q_{r_1},q_{r_2};X,Y_{r_1})-\mathcal I(q_{r_2};X)\nonumber\\
&=\mathcal I(q_{r_1},q_{r_2};X)+\mathcal I(q_{r_1},q_{r_2};Y_{r_1}|X)\nonumber\\
&-\mathcal I(q_{r_2};X) \nonumber \\
&=\mathcal I(q_{r_1},q_{r_2};X)+\mathcal I(q_{r_1};Y_{r_1}|X)\nonumber\\
&+\mathcal I(q_{r_2};Y_{r_1}|q_{r_1},X)-\mathcal I(q_{r_2};X) \nonumber \\
&\stackrel{(c)}{{=}}
\mathcal I(q_{r_1},q_{r_2};\!X)\!+\!\mathcal I(q_{r_1};Y_{r_1}|X)\!-\!\mathcal I(q_{r_2};\!X) \nonumber \\
&=\mathcal I(q_{r_1},q_{r_2};X)\!+\!\mathcal H(q_{r_1}|X)\! \nonumber \\
&-\!\mathcal H(q_{r_1}|Y_{r_1},X)\!-\!\mathcal H(q_{r_2})\!+\!\mathcal H(q_{r_2}|X) \nonumber \\
&\stackrel{(d)}{=}\mathcal I(q_{r_1},q_{r_2};X)\!+\!\mathcal H(q_{r_1}|X)\! \nonumber \\
&-\!\mathcal H(q_{r_1}|Y_{r_1})\!-\!\mathcal H(q_{r_2})\!+\!\mathcal H(q_{r_2}|X) \nonumber \\
&=\ln \left( 1+a_{CF} P_s \right)\!+\!\ln \left(1+\frac{\theta_{1}}{D} \right)\! \nonumber \\
&-\! \ln \left( 1+\frac{\theta_{2}a_{r_2}P_s}{\theta_{2}+D} \right) \nonumber \\
&=\ln \left( 1+a_{CF} P_s \right)\!\nonumber \\
&+\!\ln \left( \frac{\left(\theta_1+D\right)\left(\theta_2+D\right)}{D\left( D+\theta_2 \left(1+a_{r_2}P_s \right) \right)}  \right)\nonumber\\
&{=}\ln \left( 1\!+\!a_{CF} P_s \right)\!+\!\ln \left( \frac{\left(\theta_1\!+\!D\right)\left(\theta_2\!+\!D\right)}{D\left( 1+a_{r_2}P_s \right)}  \right).
\end{align}
$(a)$ and $(d)$ follow from the fact that $X\longmapsto Y_{r_1}\longmapsto q_{r_1}$ is a Markov chain, and hence $\mathcal I(q_{r_1};X|Y_{r_1},q_{r_2})=0$ and $\mathcal H(q_{r_1}|Y_{r_1},X) = \mathcal H(q_{r_1}|Y_{r_1})$. 
$(b)$ and $(c)$ follow from $\mathcal I(q_{r_2};Y_{r_1}|X)=0$ and $\mathcal I(q_{r_2};Y_{r_1}|q_{r_1},X)=0$, respectively, with respect to the Markov chain $q_{r_2}\longmapsto X\longmapsto Y_{r_1}$.
\end{proof}
With respect to the network symmetry and based on \cref{lemma-cf-I-q1-yr1-condition-q2}, one can express
\begin{align} \label{sw_1_cf}
\mathcal I(\!q_{r_2};\!Y_{r_2}|q_{r_1}\!){=}\ln \left( 1\!+\!a_{CF} P_s \right)\!+\!\ln \left( \frac{\left(\theta_1\!+\!D\right)\left(\theta_2\!+\!D\right)}{D\left( 1+a_{r_1}P_s \right)}  \right).
\end{align} 

In order to have a successful transmission, the destination must first decode the relays signals and then $X$. 
From \cref{1wz_part,2wz_part,3wz_part,mac_part,sw_3_cf,sw_2_cf,sw_1_cf}, to decode the relays signals at the detination, the following inequalities must be satisfied. 
\begin{align} \label{CF-P-C-decode-region}
&\ln \!\left(\! 1\!+\!a_{CF} P_s \!\right)\!+\!\ln\! \left(\! \frac{\left(\theta_1\!+\!D\right)\left(\theta_2\!+\!D\right)}{D\left( 1\!+\!a_{r_2}P_s \right)}  \right) \!\leq\! R_r\!<\!\ln \left(1\!+\! a_{1}P_r \!\right), \nonumber \\
&\ln \!\left(\! 1\!+\!a_{CF} P_s \!\right)\!+\!\ln\! \left(\! \frac{\left(\theta_1\!+\!D\right)\left(\theta_2\!+\!D\right)}{D\left( 1\!+\!a_{r_1}P_s \right)}  \right) \!\leq \!R_r\!<\!\ln \left(1\!+\! a_{2}P_r \!\right), \nonumber \\
&\ln\left( \frac{\left( a_{r_1}+a_{r_2} \right)P_s+1}{D^2} \right) \leq 2R_r<\ln \left( (a_{1}+a_{2})P_r+1 \right).
\end{align}

Therefore, the probability of decoding the relays signals at the destination is expressed as follows,

\begin{align} \label{CF-P-C-decode-=rob-quantizeds}
 \mathcal P_C=\Pr \Bigg\{ 
\max \left\{ \ln\left( \frac{\sqrt{\left( a_{r_1}+a_{r_2} \right)P_s+1}}{D} \right),  \right. \nonumber \\
 \left. \ln \left( 1+a_{CF} P_s \right) \!\!+\!\ln \left( \frac{\left( \theta_1+D\right) \left( \theta_2+D \right)}{D \left( 1+a_{r_{min}}P_s \right)}  \right) \right\} \nonumber \\
 <R_{r}< \nonumber \\
 \min \left\{ \ln\left( \sqrt{\left( a_{1}+a_{2} \right)P_r+1} \right), \ln \left( a_{min}P_r+1 \right) \right\} 
\Bigg\},
\end{align}
where $a_{r_{min}}\Def \min\left\{ a_{r_1},a_{r_2} \right\}$ and $a_{min}\Def \min\left\{ a_1,a_2 \right\}$.

After decoding the relays signals at the destination, the source signal is decoded subject to
\begin{align}
R \leq  \mathcal I(q_{r_1},q_{r_2};X) = \ln \left( 1+a_{CF} P_s \right),
\end{align}
where $R=\ln \left( 1+P_s s \right)$ is the source transmission rate.

To summarize, we have shown the following.
\begin{theorem}
The maximum throughput in the proposed CF scheme is expressed by
\begin{align}
 \mathcal R_{C,s}^m=\max_{s, D, R_{r}} \mathcal P_C \overline F_{\mathrm a_{CF}}(s) \ln \left( 1+P_s s \right),
\end{align}
where $a_{CF}$ and $\mathcal P_C$ are given by \cref{def-a-cf-eq,CF-P-C-decode-=rob-quantizeds}, respectively.

Analogously, \cref{infinite-layer_CF} yields the maximum continuous-layer expected-rate in this scheme.
\begin{align} \label{infinite-layer_CF}
 \mathcal R_{C,c}^m=\max_{\begin{subarray}{c}
D, R_{r}
\end{subarray}} \mathcal P_C \int_{s_{0}}^{s_{1}} \overline F_{\mathrm a_{CF}}(s) \left( \frac{2}{s}+\frac{ f_{\mathrm a_{CF}}^{\prime}(s)}{f_{\mathrm a_{CF}}(s)} \right)\ud s.
\end{align}
The integration limits are the solutions to $\overline F_{\mathrm a_{CF}}(s_0)=s_0 \left( 1+P_s s_0  \right) f_{\mathrm a_{CF}}(s_0)$ and $\overline F_{\mathrm a_{CF}}(s_1)=s_1 f_{\mathrm a_{CF}}(s_1)$, respectively.
\end{theorem}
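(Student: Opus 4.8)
The plan is to decompose a successful transmission into two stages --- delivery of the two compressed descriptions to the destination, and recovery of the source message from those descriptions --- and to identify their success probabilities with the two factors appearing in the claimed expressions. For the throughput, the source uses the single-layer code $X=\gamma X_1$ with $\gamma^2=P_s$ at rate $R=\ln(1+P_s s)$. The first stage succeeds exactly when the rate pair $(R_r,R_r)$ lies simultaneously inside the second-hop multiple-access region \cref{mac_part} and outside the Wyner--Ziv/Slepian--Wolf binning region \cref{1wz_part,2wz_part,3wz_part}; substituting the closed forms \cref{sw_3_cf,sw_2_cf,sw_1_cf} collapses these requirements into the chain of inequalities \cref{CF-P-C-decode-region}, whose probability is $\mathcal P_C$ as written in \cref{CF-P-C-decode-=rob-quantizeds}. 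Conditioned on the descriptions being available, the composite channel $X\to(q_{r_1},q_{r_2})$ is, by \cref{I-q-q-X-eq-} and the Gaussianity of all auxiliary variables, a scalar Gaussian channel of signal-to-noise ratio $a_{CF}P_s$ with $a_{CF}$ from \cref{def-a-cf-eq}, so the message is recovered iff $R\le\ln(1+a_{CF}P_s)$, i.e.\ iff $a_{CF}\ge s$, an event of probability $\overline F_{\mathrm a_{CF}}(s)$. Multiplying the two success probabilities and maximizing the resulting expected-rate $\mathcal P_C\,\overline F_{\mathrm a_{CF}}(s)\ln(1+P_ss)$ over the remaining free parameters $s$, $D$ and $R_r$ of the scheme then yields $\mathcal R_{C,s}^m$.

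For the continuous-layer expected-rate I would first observe that replacing the single-layer code by a continuum of layers does not affect the first stage: the received signal at relay $\ell$ is $\mathcal{CN}(0,1+a_{r_\ell}P_s)$ irrespective of how the source power is split among layers, the distortion $D$ and the relay rate $R_r$ are held fixed, and hence $\mathcal P_C$ is untouched by the source layering. Conditioned on the descriptions arriving, the problem reduces \emph{exactly} to the single-user broadcast-approach problem over the point-to-point fading channel with gain $a_{CF}$ and c.d.f.\ $F_{\mathrm a_{CF}}$: the layer indexed by $s$ is decoded iff $a_{CF}\ge s$. Invoking the general formula already used in \cref{infinite-layer-general-shamai_2}, the conditional maximum continuous-layer expected-rate equals $\max_{I(s)}\int_{0}^{\infty}\overline F_{\mathrm a_{CF}}(s)\,\frac{-sI'(s)}{1+sI(s)}\,\ud s$, and solving the associated Euler equation exactly as in the proof of \cref{AF_expected} (cf.\ \cref{on-off6,on-off7}, with $\mathcal P_C$ playing the role of the prefactor $e^{-a_{th}}(2-e^{-a_{th}})$ there) gives $\int_{s_0}^{s_1}\overline F_{\mathrm a_{CF}}(s)\bigl(\frac{2}{s}+\frac{f_{\mathrm a_{CF}}'(s)}{f_{\mathrm a_{CF}}(s)}\bigr)\,\ud s$; the lower limit comes from the total-power constraint $\int(\text{layer power at }s)\,\ud s=P_s$, giving $\overline F_{\mathrm a_{CF}}(s_0)=s_0(1+P_ss_0)f_{\mathrm a_{CF}}(s_0)$, and the upper limit from the nonnegativity of the optimal layer-power density, giving $\overline F_{\mathrm a_{CF}}(s_1)=s_1 f_{\mathrm a_{CF}}(s_1)$. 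Multiplying by $\mathcal P_C$ and maximizing over $D$ and $R_r$ (the only parameters left, since the source layering has been optimized out by the Euler equation and $R_r$ enters only through $\mathcal P_C$) produces \cref{infinite-layer_CF}.

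The step I expect to be the main obstacle is making the two-stage decomposition rigorous, namely justifying that the overall probability of success factors as $\mathcal P_C\,\overline F_{\mathrm a_{CF}}(s)$: the description-delivery event of \cref{CF-P-C-decode-=rob-quantizeds} and the event $\{a_{CF}\ge s\}$ both depend on $(a_{r_1},a_{r_2})$ through $a_{CF}$ and the $\theta_\ell$'s, so one must argue that the two are handled sequentially with $\mathcal P_C$ as the outer delivery guarantee (or otherwise control the joint probability). A secondary technical point is confirming that the relay-to-destination ``channel with gain $a_{CF}$'' is genuinely an equivalent scalar Gaussian channel to which Shamai's successive-refinement argument applies verbatim; this follows from \cref{I-q-q-X-eq-} and \cref{lemma-cf-I-q1-yr1-condition-q2} together with the jointly Gaussian structure of $(X,Y_{r_1},Y_{r_2},q_{r_1},q_{r_2})$, but it is worth spelling out. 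Everything else is the routine Euler-equation computation already carried out in \cref{AF-multi-layer}.
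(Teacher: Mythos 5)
Your proposal follows essentially the same route as the paper: the paper's proof of this theorem is precisely the derivation preceding it --- the Wyner--Ziv and MAC constraints collapsed via the two lemmas into the decoding region giving $\mathcal P_C$, plus the condition $R\le\ln\left(1+a_{CF}P_s\right)$ giving $\overline F_{\mathrm a_{CF}}(s)$ --- summarized as the product $\mathcal P_C\,\overline F_{\mathrm a_{CF}}(s)\ln\left(1+P_s s\right)$, with the continuous-layer expression obtained by the same E\"uler-equation computation already used for the AF and cutset results. The factorization you flag as the main obstacle is handled in the paper exactly as you handle it (the product is written directly, with $D$ and $R_r$ fixed independently of the channel realizations), so there is no substantive divergence between your argument and the paper's.
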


It turns out from the numerical results that the proposed CF scheme outperforms DAF and consequently, DF and AF, when the relay power to the source power ratio is higher than a threshold. 

\begin{remark}
If $P_{s}\to \infty$, \cref{def-a-cf-eq} is simplified to $a_{CF} \approx \frac{a_{r_1}+a_{r_2}}{1+\frac{D(D+1)}{2}}$.
%
If $P_{r}\to \infty$, then $\mathcal P_C \to 1$ and $a_{CF} \approx a_{r_1}+a_{r_2}$.
In this high SNR asymptote at the relays, \cref{infinite-layer_CF} meets the cutset-bound of
\cref{up-general-cutset-proposition} in  \cref{cutset_general_section}
, and is optimum. 
\end{remark}


\section{Upper-Bounds} \label{upper-bound-multicast-relay}

\subsection{Cutset Bound} \label{cutset_general_section}

The network cutset bound is the minimum of the maximum throughput and maximum expected-rate of the first-hop and the second-hop which lend itself to a closed form expression.
The first-hop cutset is equivalent to a point-to-point single-input multiple-output (SIMO) channel with two receive antennas. 
The second-hop is equivalent to a multiple-input single-output (MISO) channel with two transmit antennas. 
The throughput cutset bound is the minimum of the maximum throughput in these two cutsets, that is
\begin{align} \label{cutset-throughput}
\mathcal R_{CS,s}^m = \max_s e^{-s} (1+s) \ln \left( 1+ P s \right),
\end{align}
where $P \Def \min \left\{ P_s,P_r  \right\}$.

Similarly, the maximum expected-rate of the diamond channel is upper-bounded by the minimum of the maximum expected-rates of those two cutsets, which is summarized below. 

\begin{proposition} \label{up-general-cutset-proposition}
In the diamond channel, the cutset bound on the maximum expected-rate is specified by
 \begin{align}
\mathcal R_{CS,c}^{m}{=}3\emph{E}_1(s_0)\! -\! 3\emph{E}_1(s_1)\! -\! (s_0\! -\! 1)e^{-s_0}\!+\! (s_1\!-\! 1)e^{-s_1},
\end{align}
where $s_1=\frac{1+\sqrt{5}}{2}$, and $s_0 = \sqrt[3]{\sqrt{ A^2-B^3}+A}  
 +\frac{B}{\sqrt[3]{\sqrt{ A^2-B^3}+A}}-\frac{1}{3 P}$
 with $P=\min\left\{P_s,P_r\right\}$,
 $A=\frac{1}{2 P}-\frac{1}{6 P^2}-\frac{1}{27 P^3}$, and $B=\frac{1}{3 P}+\frac{1}{9 P^2}$.
\end{proposition}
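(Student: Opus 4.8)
The plan is to reduce the claim to the single-user broadcast-approach (Shamai--Steiner) formula already invoked in the proof of \cref{AF_expected}. By the channel-enhancement idea combined with the max-flow min-cut principle, $\mathcal R_{CS,c}^m$ equals the smaller of the maximum continuous-layer expected-rates over the two network cuts. The first-hop cut separates the source from $\{$relays, destination$\}$; with receiver CSI this is the SIMO channel $Y_{r_\ell}=h_{r_\ell}X+Z_{r_\ell}$, which after maximal-ratio combining is a scalar block-fading channel with power gain $a_{r_1}+a_{r_2}$ and input power $P_s$. The second-hop cut separates $\{$source, relays$\}$ from the destination; granting the relays transmit CSI (which can only enlarge the bound) the optimal Gaussian input is rank-one beamforming, giving a scalar block-fading channel with power gain $a_1+a_2$ and input power $P_r$. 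Since $a_{r_1},a_{r_2},a_1,a_2$ are i.i.d.\ exponential, in both cases the power gain has the Gamma$(2,1)$ law, with
\begin{align}
\overline F(s)=e^{-s}(1+s),\qquad f(s)\Def F^{\prime}(s)=s\,e^{-s},\qquad f^{\prime}(s)=e^{-s}(1-s).
\end{align}
Because enlarging the power budget only enlarges the set of admissible layer power profiles, the maximum expected-rate is nondecreasing in the power, so the minimum of the two cuts is attained at $P\Def\min\{P_s,P_r\}$.

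Next I would invoke the expression established in \cref{AF_expected} by solving the Euler equation for the optimal layer power distribution: for a scalar block-fading channel with power-gain CDF $F$ and input power $P$,
\begin{align}
\mathcal R^m=\int_{s_0}^{s_1}\overline F(s)\left(\frac{2}{s}+\frac{f^{\prime}(s)}{f(s)}\right)\ud s,
\end{align}
where $s_0$ and $s_1$ solve $\overline F(s_0)=s_0(1+Ps_0)f(s_0)$ and $\overline F(s_1)=s_1 f(s_1)$. Substituting the quantities above gives $\frac{2}{s}+\frac{f^{\prime}(s)}{f(s)}=\frac{3}{s}-1$, hence the integrand $\overline F(s)\left(\frac{3}{s}-1\right)=e^{-s}\left(\frac{3}{s}+2-s\right)$. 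Splitting the integral, the term $3\int_{s_0}^{s_1}e^{-s}/s\,\ud s$ equals $3\text{E}_1(s_0)-3\text{E}_1(s_1)$ by the definition of $\text{E}_1$, while $\int_{s_0}^{s_1}e^{-s}(2-s)\,\ud s=\big[e^{-s}(s-1)\big]_{s_0}^{s_1}=(s_1-1)e^{-s_1}-(s_0-1)e^{-s_0}$; summing the two pieces reproduces exactly the claimed closed form.

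It then remains to identify the two limits. The equation $\overline F(s_1)=s_1 f(s_1)$ reduces to $1+s_1=s_1^2$, whose positive root is $s_1=\frac{1+\sqrt5}{2}$. The equation $\overline F(s_0)=s_0(1+Ps_0)f(s_0)$ reduces to $1+s_0=s_0^2(1+Ps_0)$, i.e.\ the cubic $Ps_0^3+s_0^2-s_0-1=0$; depressing it via $s_0\mapsto s_0-\frac{1}{3P}$ and applying Cardano's formula yields the stated radical expression with $A$ and $B$. I would also check that this cubic has a unique admissible real root satisfying $0<s_0<s_1$ for every $P>0$ (as $P\to\infty$ one has $s_0\to0$, and as $P\to0^+$ one has $s_0\to s_1^-$), so the optimizer is genuinely multi-layered and no single-layer corner case arises.

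The step demanding the most care is the first one: making the channel-enhancement/cutset reduction rigorous in the \emph{expected-rate} (multi-layer) setting, i.e.\ arguing that the diamond channel's maximum expected-rate is dominated by that of each enhanced single-user hop. This rests on the observation that a broadcast code for the diamond channel induces, across each cut, a broadcast code for the corresponding enhanced point-to-point channel, together with the fact that the Gamma$(2,1)$ gain and the monotonicity in $P$ let both cuts be evaluated at $P=\min\{P_s,P_r\}$. The subsequent calculus and the Cardano step are routine, the only mild subtlety being the choice of the correct real branch for $s_0$.
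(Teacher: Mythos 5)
Your proposal is correct and follows essentially the same route as the paper: reduce the cutset bound to a scalar block-fading channel with Gamma$(2,1)$ power gain ($\overline F(s)=(1+s)e^{-s}$) and power $P=\min\{P_s,P_r\}$, apply the Shamai--Steiner/Euler-equation solution to obtain $\int_{s_0}^{s_1}(1+s)e^{-s}\left(\frac{3}{s}-1\right)\ud s$ with the boundary equations $s_1^2-s_1-1=0$ and $Ps_0^3+s_0^2-s_0-1=0$ (solved via Cardano), and integrate using the antiderivative $(s-1)e^{-s}-3\text{E}_1(s)$. Your added remarks on the enhancement/monotonicity justification and the root-branch check are consistent with, and slightly more explicit than, the paper's argument.
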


\begin{proof}

%

As showed in \cite{shamai}, the maximum continuous-layer expected-rate is given by 
\begin{align} \label{infinite-layer-general-me}
\mathcal R_{c}^m = \max_{I(s)} \int_{0}^{\infty} \overline F_{\mathrm a}(s) \frac{-s I^{\prime}(s)}{1+sI(s)} \ud s.
\end{align}
Noting $\overline F_{\mathrm a}(s)= \left( 1+s  \right)e^{-s}$ based on \cite{zamani2012maximum}, we have
\begin{align} \label{infinite-layer-general-shamai}
\mathcal R_{c}^m = \max_{I(s)} \int_{0}^{\infty} \frac{-s\left(1+s\right)e^{- s} I^{\prime}(s)}{1+sI(s)} \ud s.
\end{align}
The optimization solution to \cref{infinite-layer-general-shamai} with respect to $I(s)$ under the total power constraint $P = \min \left\{ P_s,P_r  \right\}$ is found using variation methods \cite{calculus}. By solving the corresponding E$\ddot{\text{u}}$ler equation \cite{calculus}, we come up with the final solution as follows,
\begin{align} \label{infinite-layer-miso-integral}
\mathcal R_{c}^m = \int_{s_0}^{s_1}e^{-s}\left(1+s\right) \left( \frac{3}{s} -1  \right)  \ud s,
\end{align}
where boundaries $s_0$ and $s_1$ are the solutions to $P s_0^3+s_0^2-s_0-1=0$ and $s_1^2-s_1-1=0$, respectively.
Therefore, $s_1=\frac{1+\sqrt{5}}{2}$, and $s_0 = \sqrt[3]{\sqrt{ A^2-B^3}+A}  
 +\frac{B}{\sqrt[3]{\sqrt{ A^2-B^3}+A}}-\frac{1}{3 P}$
 with  $A=\frac{1}{2 P}-\frac{1}{6 P^2}-\frac{1}{27 P^3}$, and $B=\frac{1}{3 P}+\frac{1}{9 P^2}$.
The indefinite integral (antiderivative) of \cref{infinite-layer-miso-integral} is 
\begin{align}
\int e^{-s}\left(1+s\right) \left( \frac{3}{s} -1  \right)  \ud s=(s- 1)e^{-s}- 3\emph{E}_1(s).
\end{align}
Applying the integration limits completes the proof.

\end{proof}

\subsection{Relay-Cooperation (RC) Bound} \label{rc-section-upper-bound}

Here, a tighter upper-bound based on a full-cooperation between the relays is proposed. 
Let us define an \emph{upper-bound model} by considering a full cooperation and power cooperation between the relays in the problem of interest (see \cref{upper_bound_model}). The \emph{upper-bound model} is equivalent to a dual-hop single-relay channel with two antennas at the relay. 
\begin{figure}
\centering
\includegraphics[height=1cm, width=.4\linewidth]{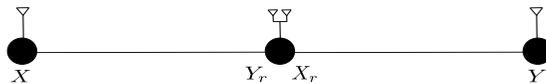}
\caption{\emph{Upper-bound model}.}
\label{upper_bound_model}
\end{figure}
The following presents the throughput of this \emph{upper-bound model}.
\begin{proposition} \label{upper-bound-one-layer-general-rc}
the maximum throughput in the above upper-bound model is given by
\begin{align} \label{throughput-upper-bound}
\mathcal R_{RC,s}^m {=} \max_s \left( 1+s  \right)\!\! \left( 1+s\frac{P_s}{P_r}  \right)\!\! e^{-s\left( 1+\frac{P_s}{P_r} \right)}\! \ln \left( 1+P_s s  \right).
\end{align}
\end{proposition}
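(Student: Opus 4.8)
The plan is to compute the single-layer decode-forward throughput of the upper-bound model directly, re-using the structural facts already established in the proof of \cref{DF-single-theorem}. Recall that this model merges the two relays into one node with two antennas and a pooled transmit budget $2P_r$ (equivalently $P_r$ per antenna under equal allocation), so the network becomes a two-hop channel with a SIMO first hop $X\mapsto(Y_{r_1},Y_{r_2})$ and a MISO second hop $(X_{r_1},X_{r_2})\mapsto Y$. As in the other single-layer schemes, take a codeword $X=\gamma X_1$ with $\gamma^2=P_s$ and rate $R=\ln(1+P_s s)$ for a free parameter $s>0$, write the throughput as $\Pr\{\text{relay decodes}\}\cdot\Pr\{\text{destination decodes}\}\cdot R$, and then maximize over $s$.

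For the first hop, the relay has perfect CSI on $h_{r_1},h_{r_2}$, so maximum-ratio combining of $Y_{r_\ell}=h_{r_\ell}X+Z_{r_\ell}$ yields an equivalent point-to-point link of SNR $P_s(a_{r_1}+a_{r_2})$; hence the relay recovers $X_1$ iff $a_{r_1}+a_{r_2}\ge s$. Since $a_{r_1},a_{r_2}$ are i.i.d.\ $\exp(1)$, their sum is $\mathrm{Gamma}(2,1)$-distributed and $\Pr\{a_{r_1}+a_{r_2}\ge s\}=(1+s)e^{-s}$, which is the factor $(1+s)e^{-s}$ in \cref{throughput-upper-bound}. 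For the second hop, conditioned on successful decoding the relay re-encodes $X_1$ at rate $R$ and transmits with covariance $\mathbf Q=P_r\left[\begin{matrix}1&\rho\\\rho&1\end{matrix}\right]$, $\mathrm{tr}(\mathbf Q)=2P_r$, without forward CSI, while the destination has full CSIR and observes $\ln(1+\vec h\mathbf Q\vec h^\dag)$ with $\vec h=[\,h_1\ h_2\,]$. Diagonalizing $\mathbf Q=\mathbf U\mathbf D\mathbf U^\dag$ and using the rotational invariance of $\vec h$ exactly as in \cref{alamouti_proof_R3}, the second-hop success probability reduces to the MISO outage $\Pr\{\vec h\mathbf D\vec h^\dag\ge P_s s\}$, which by the Jorswieck--Boch result is extremized at $\rho=0$ or $\rho=1$; the choice $\rho=0$ gives effective gain $a_1+a_2$ and SNR $(a_1+a_2)P_r$, so success probability $\left(1+\frac{P_s}{P_r}s\right)e^{-\frac{P_s}{P_r}s}$, whereas $\rho=1$ gives $2P_r\,a$ with $a\sim\exp(1)$. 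To discard $\rho=1$ I would transplant the derivative argument of \cref{DF-single-theorem}: with $r(s)$ the ratio of $\overline F_{\rho=1}$ to $-\overline F_{\rho=1}'$ and $g(s,P_s)=\ln(1+P_s s)^{(1+P_s s)/P_s}$, one obtains that the throughput-maximizing threshold $s^o$ satisfies $s^o<s_t\le s_c$, where $s_c$ is precisely the value below which $\overline F_{\rho=0}(s)\ge\overline F_{\rho=1}(s)$; hence $\rho^o=0$.

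Since the two hops are statistically independent, multiplying the two probabilities gives
\begin{align}
\mathcal R_{RC,s}=(1+s)\,e^{-s}\left(1+\frac{P_s}{P_r}s\right)e^{-\frac{P_s}{P_r}s}\ln(1+P_s s),
\end{align}
and maximizing over $s>0$ (equivalently over $R$) yields \cref{throughput-upper-bound}. The one genuinely delicate point is the last one: Jorswieck--Boch only singles out $\{\rho=0,\rho=1\}$ as candidates at a fixed rate, so one still has to show that the throughput-optimal operating point falls in the $\rho=0$ regime, and this is exactly the $s_t$-versus-$s_c$ comparison already carried out for the diamond channel, which transfers essentially verbatim. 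Everything else --- MRC for the SIMO hop, the $\mathrm{Gamma}(2,1)$ tail $(1+x)e^{-x}$, and the independence of the two hops --- is routine.
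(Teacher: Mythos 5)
Your proposal is correct and follows essentially the same route as the paper: treat the merged relay as a two-hop DF link, get the first-hop factor $(1+s)e^{-s}$ from the $\mathrm{Gamma}(2,1)$ tail, reduce the MISO second hop via diagonalization and Jorswieck--Boch to $\rho\in\{0,1\}$, and rule out $\rho=1$ by the same $r(s)$-versus-$g(s,P_s)$ derivative argument showing $s^o<s_t<s_c$, exactly as the paper does (with its explicit $s_t=\frac{\sqrt{P_s^2+4P_sP_r+20P_r^2}-P_s+2P_r}{2P_s+4P_r}$). The only detail you defer --- verifying $r(s)<s$ for $s\ge s_t$ with the combined two-hop $\overline F_{\rho=1}(s)=(1+s)e^{-s\left(1+\frac{P_s}{2P_r}\right)}$ --- is precisely the computation the paper carries out, so nothing essential is missing.
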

 
\begin{proof}
The optimum relaying strategy for dual-hop single-relay channels is DF.
In the same general outline to the proof of \cref{DF-single-theorem}, let $\overline F\left(s\right)$ denote
\begin{align} \label{throughput-upper-bound13}
\overline F\left(s\right) = \Pr\left\{ a_r>s \right\}\Pr\left\{ a>\frac{P_s}{P_r}s \right\},
\end{align}
where $a=\frac{1}{P_r} \vec h\mathbf Q\vec h^\dag$ and $\vec h = \left[\begin{matrix} h_1 & h_2 \end{matrix}\right]$.
The maximum throughput in general can be expressed as
\begin{align} \label{throughput-upper-bound14}
\mathcal R_{RC,s} = \max_s \overline F(s) \ln(1+P_{s} s).
\end{align}
Analogously to the proof of \cref{DF-single-theorem}, we can restrict our attention to $\rho=0$ or $\rho=1$, where $\rho$ is the correlation coefficient between the signals transmitted from two relay antennas.
To prove by contradiction, first we assume that $\rho^o=1$; next we shall show that $\overline F_{\rho=0} (s^o)>\overline F_{\rho=1} (s^o)$, which implies a contradiction and concludes $\rho^o=0$.
Defining 
\begin{align}
&g(s)\Def \ln \left( 1+P_s s \right)^{\frac{1+P_s s }{P_s}}, \\
&r(s)\Def \frac{\overline F_{\rho=1}(s)}{f_{\rho=1}(s)},
\end{align}
\cref{two_dist_single,diff_compair_trg} hold. 

Noting 
\begin{align}
\overline F_{\rho=1}(s)=\left( 1+s  \right) e^{-s\left( 1+\frac{P_s}{2P_r} \right)},
\end{align}
we have
\begin{align} 
r(s)=\frac{1+s}{\left(1+s  \right)\left(1+\frac{P_s}{2P_r}  \right)-1}.
\end{align}
It can be shown that 
\begin{align} \label{throughput-upper-bound17} 
r(s) < s, ~~~\forall s\geq s_t,
\end{align}  
where
\begin{align} \label{throughput-upper-bound18} 
s_t \Def 
\frac{\sqrt{P_s^2+4P_s P_r+20 P_r^2}-P_s+2P_r}{2P_s+4P_r}.
\end{align}  
Hence, \cref{lhs_diff_miso_single_Rayleigh6,diff_miso_single_Rayleigh1} still hold by redefining $r(s)$ and $s_t$ as above.

As $\mathcal R(s)$ is a continuous function, one can conclude that $0< s^o < s_t$. Noting 
\begin{align}
s_t < s_c=- \left( 2\mathcal W_{-1} \left(\frac{-1}{2\sqrt{e}}\right)+1\right) \frac{P_r}{P_s} \approx 2.5129\frac{P_r}{P_s}
\end{align}
and 
\begin{align}
\overline F_{\rho=0}(s) > \overline F_{\rho=1}(s), ~~ \forall s<s_c
\end{align}
yields $\overline F_{\rho=0}(s^o) > \overline F_{\rho=1}(s^o)$ and thereby, $\rho^o=0$ and $a=a_1+a_2$. Substituting the channel gain CDFs in \cref{throughput-upper-bound14,throughput-upper-bound13}, the maximum throughput of the DF diamond channel is given by \cref{throughput-upper-bound}.

\end{proof}

The highest expected-rate of dual-hop single-relay channels has been studied in \cite{steiner}. 
Here, only the final solution is mentioned as
\begin{align} \label{gen_upp}
\mathcal R_{RC,c}^m{=}&\max_{\begin{subarray}{c}
I_s(a_{r})\\
I_r(a\left|{a_{r}}\right.)
\end{subarray}
}
\! \int\limits_0^\infty \! \int\limits_0^\infty f_{a_{r}}(t)\overline F_{a}(s) \frac{-s I_r^{\prime}(s|{a_{r}}=t)\ud s}{1+s I_r(s|{a_{r}}=t)} \ud t.
\end{align}
The power constraints at the transmitter and the relay are 
\begin{align} \label{ub-power}
I_s(0)=P_s,~ I_r(0|a_r=t)=P_r.
\end{align}
As the maximum transmission rate of the relay can not exceed its successfully decoded rate, the constraint on rate is 
\begin{align} \label{ub-rate-constraint}
\int_0^\infty \frac{s I_r^{\prime}(s|a_{r}=t)\ud s}{1+s I_r(s|a_{r}=t)} = \int_0^t \frac{s I_s^{\prime}(s)\ud s}{1+sI_s(s)},~~ \forall t.
\end{align}
The optimization problem of \cref{gen_upp} can be solved numerically using the algorithm proposed in \cite{vahid}. 

Following a similar outline in the proof of \cref{df-multi-strategy-theorem,upper-bound-one-layer-general-rc}, one can show that the optimum transmission strategy at the relay is to transmit uncorrelated equal power signals from both of the relay antennas at each layer.
Thus, $\overline F_{\mathrm a_{r}}(s)=\overline F_{\mathrm a}(s)=(1+s)e^{-s}$.
Substituting in \cref{gen_upp},  we come up with the upper-bound as follows, which does not lend itself to a closed form formulation.

\begin{proposition}
In parallel relay networks, the maximum expected-rate at the destination is bounded by
\begin{align} \label{gen_upp_rayleigh_2}
\mathcal R_{c}^{rc}{=}&\max_{\begin{subarray}{c}
I_s(a_{r})\\
I_r(a\left|{a_{r}}\right.)
\end{subarray}
} 
\!\! \int\limits_0^\infty \!\! te^{-t}\!\!\! \int\limits_0^\infty \frac{s(s+1) e^{-s} I_r^{\prime}(s|{a_{r}}=t)}{1+sI_r(s|{a_{r}}=t)}\emph{d} s \emph{d}t,
\end{align}
subject to the power and rate constraints \cref{ub-power,ub-rate-constraint}, respectively.
\end{proposition}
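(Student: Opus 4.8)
The plan is to recognize the desired bound as the maximum expected-rate of the \emph{upper-bound model} of \cref{rc-section-upper-bound} and then to specialize the general dual-hop formula \cref{gen_upp} to Rayleigh fading. First I would observe that granting the two relays full message and power cooperation can only enlarge the set of achievable expected-rates, so the expected-rate of the diamond channel is upper-bounded by that of the resulting dual-hop single-relay channel whose relay carries two antennas: a $1\times2$ SIMO first hop and a $2\times1$ MISO second hop. For that channel the relay knows its backward gain $a_r$ but not its forward gain $a$, so the optimal broadcast (multi-layer) strategy is exactly \cref{gen_upp} from \cite{steiner}, with the source and relay power constraints \cref{ub-power} and the rate-transfer constraint \cref{ub-rate-constraint}, the latter because the relay cannot forward a layer it has not decoded.

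Next I would pin down the two fading laws entering \cref{gen_upp}. The first-hop gain is $a_r=|h_{r_1}|^2+|h_{r_2}|^2$, a sum of two i.i.d.\ unit-mean exponentials, hence Erlang$(2,1)$ with $\overline F_{\mathrm a_r}(s)=(1+s)e^{-s}$ and $f_{\mathrm a_r}(t)=te^{-t}$. For the second hop the effective gain depends on the relay's transmit covariance, so the crucial step is to show that in every layer, and for every backward-gain value $t$, equal-power uncorrelated signalling from the two relay antennas is optimal. Following the outline of the proofs of \cref{df-multi-strategy-theorem,upper-bound-one-layer-general-rc}, I would write the per-layer covariance as $\mathbf Q_i=P_i\,\mathbf U\,\mathrm{diag}(1+\rho_i,1-\rho_i)\,\mathbf U^\dag$, use the fact that $\vec h\mathbf U$ has the same distribution as $\vec h$ to reduce the relevant SINR-outage event to a diagonal MISO form, invoke the Jorswieck--Boch result \cite{jorswieck} to restrict attention to $\rho_i\in\{0,1\}$, and then run the same monotonicity argument as in \cref{upper-bound-one-layer-general-rc} (namely $r(s)<s<g(s)$ for $s\ge s_t$, forcing the optimizing threshold to satisfy $s^o<s_t<s_c$, a range on which $\overline F_{\rho=0}\ge\overline F_{\rho=1}$) to conclude $\rho_i^o=0$ for each layer. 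Consequently the effective forward gain is $a=|h_1|^2+|h_2|^2$ with $\overline F_{\mathrm a}(s)=(1+s)e^{-s}$ as well.

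Finally I would substitute $f_{\mathrm a_r}(t)=te^{-t}$ and $\overline F_{\mathrm a}(s)=(s+1)e^{-s}$ into \cref{gen_upp}, which immediately produces \cref{gen_upp_rayleigh_2}, the constraints \cref{ub-power,ub-rate-constraint} being carried over unchanged. The main obstacle is precisely the layer-by-layer optimality claim for the relay covariance: unlike in the single-layer statements, one must argue simultaneously for every infinitesimal layer and every realization $t$ of the backward gain that the Jorswieck--Boch outage-minimizing covariance still coincides with equal-power uncorrelated signalling, which means verifying that the residual-interference structure seen when decoding layer $s$ does not move the relevant threshold past $s_c$, and that the conclusion $\rho^o=0$ is uniform in $t$. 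Once that is in place, the rest is the routine insertion of the Erlang$(2,1)$ density and tail into the already-established expression \cref{gen_upp}.
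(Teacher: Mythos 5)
Your proposal is correct and follows essentially the same route as the paper: invoke the full-cooperation \emph{upper-bound model}, apply the dual-hop single-relay broadcast formula \cref{gen_upp} of \cite{steiner} with the constraints \cref{ub-power,ub-rate-constraint}, argue per-layer uncorrelated equal-power relay signalling via the outline of \cref{df-multi-strategy-theorem,upper-bound-one-layer-general-rc} (Jorswieck--Boch plus the $r(s)<s<g(s)$ monotonicity step), and substitute $f_{\mathrm a_r}(t)=te^{-t}$ and $\overline F_{\mathrm a}(s)=(1+s)e^{-s}$. The "main obstacle" you flag (uniformity of $\rho^o=0$ over layers and over $t$) is exactly the step the paper itself dispatches with a brief "one can show," so your treatment is, if anything, more explicit about what remains to be verified.
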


\subsection{DF-Upper-Bounds} \label{DF-upper-bound-section}

As pointed out earlier, the continuous-layer coding for DF relaying can not be directly solved by variations methods. Here, two upper-bounds for the maximum continuous-layer expected-rate in DF scheme are obtained. Let us define a \emph{DF-upper-bound model} as a diamond channel with uninformed transmitters, wherein the channel gains of the transmitter-relay links are both $\max\{a_{r_1},a_{r_2}\}$, and that of the relay-destination links are $a_1$ and $a_2$. This channel can be modeled by a dual-hop single-relay channel with the channel gains $a_{r}=\max\{a_{r_1},a_{r_2}\}$ and $a$ for the transmitter-relay link and the relay-destination link, respectively. 
Clearly, the maximum expected-rate of this model yields an upper-bound on the maximum expected-rate of DF relaying.


The optimum relaying strategy in the \emph{DF-upper-bound model} is decode-forward, and is given by \cref{gen_upp}. 
Analogous to \cref{rc-section-upper-bound}, it can be shown that the optimum transmission strategy at the relay is to transmit uncorrelated equal power signals from the relays at each layer.
Hence, substituting $\overline F_{\mathrm a_{r}}(s)=e^{-s} \left( 2-e^{-s} \right)$ and $\overline F_{\mathrm a}(s)=(1+s)e^{-s}$ in \cref{gen_upp},  we come up with the upper-bound as follows, which does not lend itself to a closed form formulation.

\begin{proposition} 
In the DF diamond channel, the maximum expected-rate at the destination is bounded by
\begin{align} \label{gen_upp_rayleigh}
\mathcal R_{cl}^m \!\!=\!\!\!\! &\max_{\begin{subarray}{c}
I_s(a_{r})\\
I_r(a\left|{a_{r}}\right.)
\end{subarray}
} 
\!\!\! 2\!\! \int\limits_{0}^{\infty} \!\!\! e^{-t}\! \left( e^{-t}-1 \right)\!\!\!\! \int\limits_{0}^{\infty} \!\!\! \frac{s(s+1) e^{-s} I_r^{\prime}(s|{a_{r}}=t)}{1+sI_r(s|{a_{r}}=t)}\emph{d} s \emph{d}t,
\end{align}
subject to the power and rate constraints \cref{ub-power,ub-rate-constraint}, respectively.
\end{proposition}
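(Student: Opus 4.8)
The plan is to derive \cref{gen_upp_rayleigh} by exhibiting the \emph{DF-upper-bound model} as a channel enhancement of the DF diamond channel, importing the known maximum continuous-layer expected-rate of a single-relay dual-hop channel for that enhanced model, and then evaluating it for the exponential statistics the enhancement produces. The first and most substantive step is to argue that replacing the two source-relay gains $a_{r_1},a_{r_2}$ by their common maximum $a_r=\max\{a_{r_1},a_{r_2}\}$ cannot decrease the DF expected-rate: for any fixed choice of layer powers, thresholds, rates, and relay power allocations, raising a source-relay gain enlarges the set of layers decodable at that relay, and because the proposed DF/broadcast scheme adds a layer decoded at \emph{both} relays coherently at the destination while a layer decoded at only one relay is received incoherently, forcing both relays to decode the union of the two decoded sets can only increase every per-layer success probability $\mathcal P_i$. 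After the enhancement the two relays see the same backward gain, hence decode an identical layer set; by the symmetry of the relay-destination channel their per-layer power allocations may be taken equal, so \cref{df-multi-strategy-theorem} applies and the two relays can be merged into a single two-antenna relay (full cooperation only helps the bound). This reduces the model to a dual-hop single-relay channel with source-relay gain $a_r$ and a two-antenna MISO second hop with per-antenna gains $a_1,a_2$ and per-antenna power $P_r$, under the power constraints \cref{ub-power} and the decode-before-forward rate constraint \cref{ub-rate-constraint}. Since decode-forward is optimal for the single-relay dual-hop channel, its maximum continuous-layer expected-rate is given by \cref{gen_upp} \cite{steiner}.

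It then remains to compute the two complementary CDFs appearing in \cref{gen_upp}. Running the MISO outage-minimization argument of \cref{DF-single-theorem} and \cref{df-multi-strategy-theorem} (reused in \cref{rc-section-upper-bound}) layer by layer shows that the layer-optimal transmit strategy at the two-antenna relay is uncorrelated equal-power signalling, which makes the effective second-hop gain $a=a_1+a_2$, a sum of two unit-mean exponentials, so $\overline F_{\mathrm a}(s)=(1+s)e^{-s}$. For the first hop, $a_r$ is the maximum of two i.i.d.\ unit-mean exponentials, whence $F_{\mathrm a_r}(s)=(1-e^{-s})^2$, $\overline F_{\mathrm a_r}(s)=e^{-s}(2-e^{-s})$, and $f_{\mathrm a_r}(t)=2e^{-t}(1-e^{-t})$. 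Substituting $\overline F_{\mathrm a_r}$, $f_{\mathrm a_r}$, and $\overline F_{\mathrm a}$ into \cref{gen_upp} and keeping the constraints \cref{ub-power,ub-rate-constraint} yields exactly \cref{gen_upp_rayleigh} (up to the cosmetic rearrangement of signs in the integrand); the remaining variational optimization over $I_s(\cdot)$ and $I_r(\cdot\,|\,\cdot)$ has no closed form and is the one solved numerically via the algorithm of \cite{vahid}.

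The main obstacle is making the enhancement argument rigorous for the \emph{particular} DF/broadcast scheme: one must verify not merely that larger source-relay gains help, but that equalizing them \emph{upward} to the maximum is an improvement, which relies on the coherent-versus-incoherent combining structure of the scheme, and one must check that collapsing the two relays into a single two-antenna relay keeps the model an outer bound rather than turning it into an inner one. Once this is established, recognizing the enhanced channel as a single-relay dual-hop channel, invoking \cref{gen_upp}, and evaluating the two exponential CCDFs are routine.
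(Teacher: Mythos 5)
Your proposal follows essentially the same route as the paper: define the \emph{DF-upper-bound model} with both backward gains set to $\max\{a_{r_1},a_{r_2}\}$, view it as a dual-hop single-relay channel whose maximum continuous-layer expected-rate is \cref{gen_upp}, argue (as in \cref{rc-section-upper-bound}) that uncorrelated equal-power signalling at the relay makes the effective second-hop gain $a_1+a_2$, and substitute $\overline F_{\mathrm a}(s)=(1+s)e^{-s}$ together with $f_{\mathrm a_r}(t)=2e^{-t}\left(1-e^{-t}\right)$ to obtain \cref{gen_upp_rayleigh} under the constraints \cref{ub-power,ub-rate-constraint}. The only difference is that the paper dismisses the enhancement step with a one-line assertion, whereas you spell out the monotonicity/relay-merging justification, which is an elaboration of, not a departure from, the paper's argument.
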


The cutset bound of the \emph{DF-upper-bound model} results in a closed form expression. The results are summarized below.

\begin{proposition} \label{dfup-general-cutset-proposition}
The cutset bound of the {DF-upper-bound model} is specified by $\min\left\{\mathcal R_1, \mathcal R_2 \right\}$, where 
 \begin{align}
&\mathcal R_1 {=} 4\emph{E}_1(s_1\!)\!-\!2\emph{E}_1(2s_1\!)\!+\!e^{-2s_1} \!\!-\! 3e^{\!-s_1} \!\!-\! \ln \! \left( \! 1 \!\!-\!e^{-s_1}\!\right)\!-\!0.1157\!, \nonumber \\
&\mathcal R_2 {=} 3\emph{E}_1(s_2)-(s_2-1)e^{-s_2}-0.1296.
\end{align}
$s_1$ is the solution to $\frac{2-e^{-s_1}}{2s_1\left(1-e^{-s_1} \right)}=1+P_s s_1$ and
$s_2 = \sqrt[3]{\sqrt{ A^2-B^3}+A}  
 +\frac{B}{\sqrt[3]{\sqrt{ A^2-B^3}+A}}-\frac{1}{3 P_r}$
 with $A=\frac{1}{2 P_r}-\frac{1}{6 P_r^2}-\frac{1}{27 P_r^3}$, and $B=\frac{1}{3 P_r}+\frac{1}{9 P_r^2}$.
\end{proposition}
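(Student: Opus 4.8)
The plan is to treat the \emph{DF-upper-bound model} as a dual-hop single-relay channel and bound its maximum expected-rate by the network cutset, that is, by the minimum of the maximum expected-rates of its first hop and second hop, each of which is a point-to-point block-fading channel whose maximum continuous-layer expected-rate is given by the Shamai broadcast-approach variational formula already used in the proofs of \cref{AF_expected} and \cref{up-general-cutset-proposition}. Here $\mathcal R_1$ is the first-hop bound and $\mathcal R_2$ the second-hop bound, and the claim is that the cutset bound equals $\min\{\mathcal R_1,\mathcal R_2\}$.

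First I would identify the two effective channels. The transmitter-to-relay link of the model is a point-to-point channel with gain $a_r=\max\{a_{r_1},a_{r_2}\}$, so $\overline F_{a_r}(s)=1-(1-e^{-s})^2=e^{-s}(2-e^{-s})$, with power budget $P_s$; the relay-to-destination link is a MISO channel with two transmit antennas, and, arguing exactly as in \cref{df-multi-strategy-theorem} and \cref{upper-bound-one-layer-general-rc}, the optimal per-layer transmit strategy sends uncorrelated equal-power signals, so the effective gain is $a=a_1+a_2$ with $\overline F_a(s)=(1+s)e^{-s}$ and power budget $P_r$. For a point-to-point channel with complementary gain CDF $\overline F(s)$ and power $P$, the maximum continuous-layer expected-rate equals $\int_{s_0}^{s_1}\overline F(s)\bigl(\tfrac{2}{s}+\tfrac{f'(s)}{f(s)}\bigr)\,\ud s$ with $f=-\overline F'$, where $s_0$ solves $\overline F(s_0)=s_0(1+Ps_0)f(s_0)$ and $s_1$ solves $\overline F(s_1)=s_1 f(s_1)$; this is exactly \cref{AF_expected} specialized to a single underlying channel.

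For $\mathcal R_2$ the computation is the one already performed in the proof of \cref{up-general-cutset-proposition} with $P$ replaced by $P_r$: one gets the integrand $e^{-s}(1+s)(\tfrac{3}{s}-1)$, the lower limit $s_2$ is the positive root of $P_r s_2^3+s_2^2-s_2-1=0$ (Cardano's formula yields the stated expression in $A$ and $B$), the upper limit is the root of $s^2-s-1=0$, and the antiderivative $(s-1)e^{-s}-3\mathrm{E}_1(s)$ evaluated at the fixed upper limit collapses to the numerical constant $-0.1296$, giving $\mathcal R_2=3\mathrm{E}_1(s_2)-(s_2-1)e^{-s_2}-0.1296$. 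For $\mathcal R_1$ I would plug $\overline F_{a_r}(s)=2e^{-s}-e^{-2s}$ into the same formula: then $f(s)=2e^{-s}(1-e^{-s})$ and $f'(s)/f(s)=(2e^{-s}-1)/(1-e^{-s})$, the lower limit $s_1$ solves $\overline F_{a_r}(s_1)=s_1(1+P_s s_1)f(s_1)$, which rearranges to $\tfrac{2-e^{-s_1}}{2s_1(1-e^{-s_1})}=1+P_s s_1$, and the upper limit is the power-independent root of $\overline F_{a_r}(s)=sf(s)$. The integral splits into the part $\int\tfrac{4e^{-s}-2e^{-2s}}{s}\,\ud s=-4\mathrm{E}_1(s)+2\mathrm{E}_1(2s)$ and the part $\int\overline F_{a_r}(s)\tfrac{f'(s)}{f(s)}\,\ud s$, which after the substitution $x=e^{-s}$ becomes the rational integral $\int\tfrac{2x^2-5x+2}{1-x}\,\ud x$; partial fractions give $-e^{-2s}+3e^{-s}+\ln(1-e^{-s})$. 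Evaluating the combined antiderivative between $s_1$ and the fixed upper limit, and absorbing the value at the upper limit into the numerical constant $-0.1157$, yields the stated $\mathcal R_1$; then $\min\{\mathcal R_1,\mathcal R_2\}$ is the cutset bound.

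The step I expect to be the main obstacle is the $\mathcal R_1$ antiderivative: unlike the $(1+s)e^{-s}$ case, the term $\overline F_{a_r}(s)f'(s)/f(s)$ is not elementarily integrable in $s$, so one must pass through the $x=e^{-s}$ substitution and a careful partial-fraction decomposition, watching the branch of $\ln(1-e^{-s})$ and the fact that the upper integration limit is only a numerical root of $\overline F_{a_r}(s)=sf(s)$ (which is why $\mathcal R_1$ carries the constant $-0.1157$ rather than a closed form). A secondary item to verify, but one entirely parallel to the arguments already given for \cref{AF_expected} and \cref{up-general-cutset-proposition} and hence citable rather than re-derived, is that the Euler--Lagrange solution does produce an admissible, nonnegative, nonincreasing layer-power distribution on $(s_1,s_{\max})$ and that the two boundary equations are the correct power/transversality conditions.
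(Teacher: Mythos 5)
Your proposal follows essentially the same route as the paper: bound the DF-upper-bound model by the minimum of the two hop-wise maximum expected-rates, reuse the second-hop computation of \cref{up-general-cutset-proposition} with $P_r$ (cubic lower limit via Cardano, fixed upper limit from $s^2-s-1=0$, constant $-0.1296$), and evaluate the first hop by the same Euler-equation formula with $\overline F_{\mathrm a_r}(s)=e^{-s}(2-e^{-s})$, whose antiderivative $-4\mathrm{E}_1(s)+2\mathrm{E}_1(2s)-e^{-2s}+3e^{-s}+\ln(1-e^{-s})$ and boundary equations match the paper exactly, with the fixed upper-limit value absorbed into $-0.1157$. The calculations check out, so this is a correct reconstruction of the paper's own proof.
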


\begin{proof}
The bound on the second hop, i.e., $\mathcal R_{2}$, is a direct result of \cref{up-general-cutset-proposition}. 
%

Noting $\overline F_{\mathrm a_{r}}(s)=e^{-s} \left( 2-e^{-s} \right)$ in the first hop, analogous to the proof of \cref{up-general-cutset-proposition}, we have
\begin{align} \label{infinite-layer-general-shamai-dfup}
\mathcal R_{1} = \max_{I(s)} \int_{0}^{\infty} \frac{-e^{-s} \left( 2-e^{-s} \right) I^{\prime}(s)}{1+sI(s)} \ud s.
\end{align}
The optimization solution to \cref{infinite-layer-general-shamai-dfup} with respect to $I(s)$ under the total power constraint $P_s$ is found by solving the associated E$\ddot{\text{u}}$ler equation \cite{calculus}, which leads to
\begin{align} \label{infinite-layer-miso-integral_2}
\mathcal R_{1} = \int_{s_0}^{s_1} e^{-s} \left( 2-e^{-s} \right) \left( \frac{2}{s} +\frac{2e^{-s}-1}{1-e^{-s}}  \right)  \ud s,
\end{align}
where boundaries $s_0$ and $s_1$ are the solutions to $\frac{2-e^{-s_0}}{2s_0\left(1-e^{-s_0} \right)}=1+P_s s_0$ and $\frac{2-e^{-s_1}}{2s_1\left(1-e^{-s_1} \right)}=1$, respectively.
The indefinite integral (antiderivative) of \cref{infinite-layer-miso-integral_2} is 
\begin{align}
&\int e^{-s} \left( 2-e^{-s} \right) \left( \frac{2}{s} +\frac{2e^{-s}-1}{1-e^{-s}}  \right)  \ud s= \nonumber \\ &-4\text{E}_1(s_1)\!+\!2\text{E}_1(2s_1)\!-\!e^{-2s_1} \!+\! 3e^{-s_1} \!+\! \ln \left(1-e^{-s_1}\right)\!.
\end{align}
Applying the integration limits completes the proof.

\end{proof}


\section{Numerical Results} \label{numerical-results}

The achievable throughput, two-layer expected-rate, and continuous-layer expected-rate in the proposed multi-layer relaying schemes and their upper-bounds are shown respectively in \cref{single-layer-fig,two-layer-fig,continuous-layer-fig} for $P_s=0$ dB and $0 \leq P_r \leq 60$ dB.
Note that the rates are expressed in \emph{nats}.
When $\frac{P_r}{P_s}$, namely \textit{powers ratio}, is less than $25$ dB, DAF is the best scheme. In higher values of the \textit{powers ratio}, CF is the superior. 
AF has the worst performance for $\frac{P_r}{P_s}>10$ dB. 
When the \textit{powers ratio} goes to infinity, CF meets the upper-bounds.

\begin{figure}
\centering
\includegraphics[scale=1]{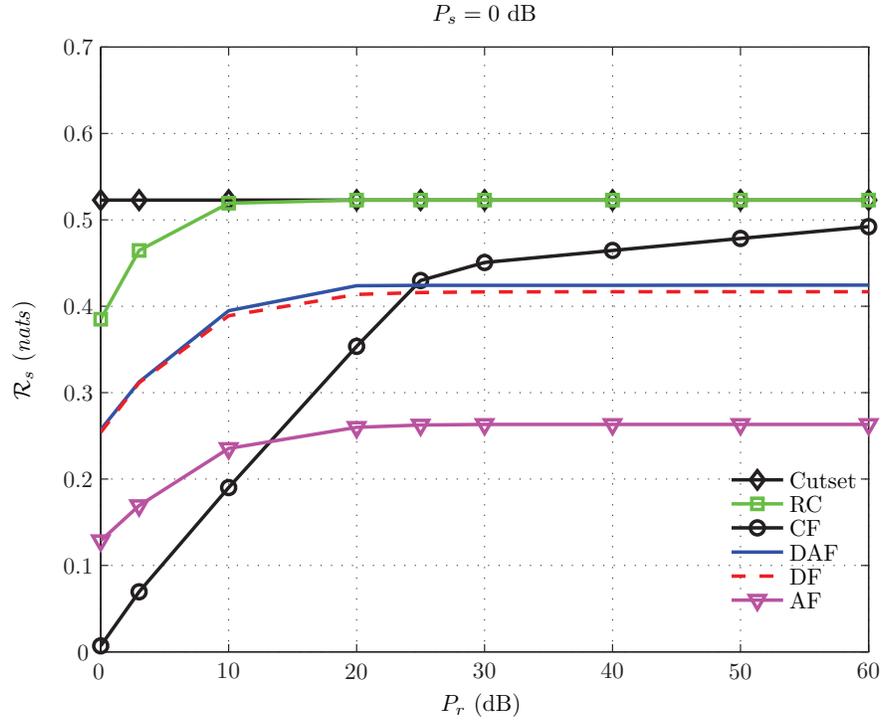}
\caption{Throughput in the diamond channel.}
\label{single-layer-fig}
\end{figure}

\begin{figure}
\centering
\includegraphics[scale=1]{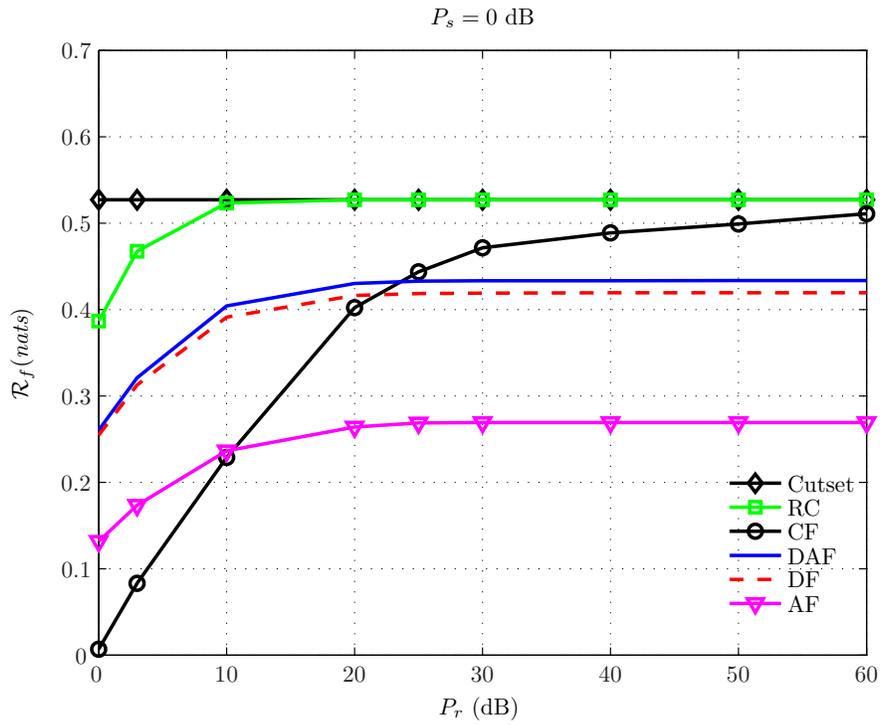}
\caption{Two-layer expected-rate in the diamond channel.}
\label{two-layer-fig}
\end{figure}

\begin{figure}
\centering
\includegraphics[scale=1]{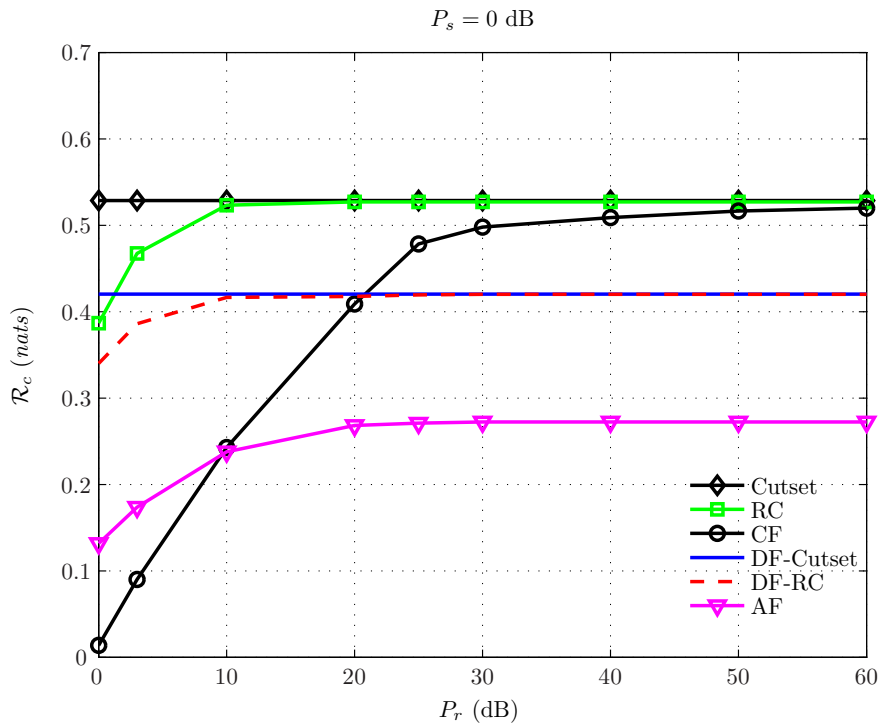}
\caption{Continuous-layer expected-rate in the diamond channel.}
\label{continuous-layer-fig}
\end{figure}

\section{Conclusion} \label{conclusion}

The main goal of the paper is to propose simple, efficient,
and practical relaying schemes to increase the average achievable rate at the destination in dual-hop parallel relay networks with Rayleigh block fading links and uninformed transmitters.
To this end, different relaying schemes, in conjunction with the broadcast approach, were proposed. 
The performance of the proposed schemes were derived and numerically compared with two obtained upper-bounds.




%
%
%


\bibliographystyle{IEEEtran}
\bibliography{bibliography}
\end{document}